\newtheorem{definition}{Definition}
\newtheorem{theorem}{Theorem}
\newtheorem{lemma}{Lemma}
\newtheorem{proof}{Proof}
\newtheorem{assumption}{Assumption}
\newtheorem{proposition}{Proposition}
\begin{document}
\title{Risk Assessment for Nonlinear Cyber-Physical Systems under Stealth Attacks}
\author{Guang Chen, Zhicong Sun, Yulong Ding, Shuang-hua Yang

\thanks{This work was supported in part by the National Natural Science Foundation of China under Grant 92067109, Grant 61873119, and Grant 62211530106; in part by the Shenzhen Science and Technology Program under Grant ZDSYS20210623092007023 and Grant GJHZ20210705141808024; and in part by the Educational Commission of Guangdong Province under Grant 2019KZDZX1018. (Corresponding author: Shuang-Hua Yang.)}

\thanks{G. Chen is with a degree from the Department of Mechanical and Energy Engineering at the Southern University of Science and Technology, China (email: 12032433@mail.sustech.edu.cn).}

\thanks{Z. Sun is with the Department of Civil and Environmental Engineering, Hong Kong Polytechnic University, Hong Kong SAR, China, and also with the Institute of Advanced Computing and Digital Engineering, Shenzhen Institute of Advanced Technology, Chinese Academy of Sciences, Shenzhen, China (email: zhicong.sun@connect.polyu.hk).}

\thanks{Y. Ding is with the Shenzhen Key Laboratory of Safety and Security for Next Generation of Industrial Internet, SUSTech, Shenzhen 518055, China, and also with the Department of Computer Science and Engineering, SUSTech, Shenzhen 518055, China (email: dingyl@sustech.edu.cn).}

\thanks{S. Yang is with the Shenzhen Key Laboratory of Safety and Security for Next Generation of Industrial Internet, SUSTech, Shenzhen 518055, China, and also with the Department of Computer Science, University of Reading, UK (email: yangsh@sustech.edu.cn).}}

\markboth{Journal of \LaTeX\ Class Files,~Vol.~14, No.~8, August~2021}
{Shell \MakeLowercase{\textit{et al.}}: A Sample Article Using IEEEtran.cls for IEEE Journals}

\IEEEpubid{0000--0000/00\$00.00~\copyright~2021 IEEE}

\maketitle

% ---------------------------------------------------------------
\begin{abstract}
Stealth attacks pose potential risks to cyber-physical systems because they are difficult to detect. Assessing the risk of systems under stealth attacks remains an open challenge, especially in nonlinear systems. To comprehensively quantify these risks, we propose a framework that considers both the reachability of a system and the risk distribution of a scenario. We propose an algorithm to approximate the reachability of a nonlinear system under stealth attacks with a union of standard sets. Meanwhile, we present a method to construct a risk field to formally describe the risk distribution in a given scenario. The intersection relationships of system reachability and risk regions in the risk field indicate that attackers can cause corresponding risks without being detected. Based on this, we introduce a metric to dynamically quantify the risk. Compared to traditional methods, our framework predicts the risk value in an explainable way and provides early warnings for safety control. We demonstrate the effectiveness of our framework through a case study of an automated warehouse.
\end{abstract}

\begin{IEEEkeywords}
  Stealth attacks, nonlinear cyber-physical systems, reachability analysis, and risk assessment.
\end{IEEEkeywords}

% ---------------------------------------------------------------
% \section{Nomenclature}
% \begin{tabular}{llll}
%     $a$ & Vector \\
%     $A$ & Matrices\\
%     $\mathbf{A}$ & Interval vector or Interval Box \\
%     $\mathbb{R}$ & spaces
% \end{tabular}

% ---------------------------------------------------------------
\section{Introduction}%这里好多是瞎引用的

% 背景：CPS的安全性和网络攻击的威胁
\IEEEPARstart{C}{yber-physical systems} (CPSs) are widely used in critical areas such as transportation\cite{henshaw2018research}, healthcare\cite{zhang2015health}, and manufacturing\cite{wittenberg2016human}. However, as the number of networked components increases, CPSs become vulnerable to malicious cyber-attacks. High-profile incidents caused by cyber-attacks, such as the Stuxnet attack\cite{farwell2011stuxnet} and the Ukraine blackout\cite{choras2016cyber}, have resulted in significant economic losses and raised major concerns about the safety of networked CPSs.

To protect CPS from cyber-attacks, researchers have developed various detectors to detect intrusions\cite{Ding2018A}. However, these detectors are vulnerable to stealth attacks, which are designed with the knowledge of the systems to avoid detection\cite{musleh2019survey}. Using special stealth techniques, attackers can deceive the controller and bring systems into critical states\cite{cardenas2011attacks, khazraei2022resiliency, zhang2021stealthy}. For example, in the case of automatic vehicles, attackers can strategically hide attacks within sensor noise. Stealth attacks can maximize the state estimation error without triggering alarms, causing the vehicle to deviate from its desired trajectory and lead to collisions. It is evident that detectors have limited effectiveness when attacks are stealthy. However, due to the stealthiness of cyber-attacks, system behavior anomalies often emerge too late for safety controls to prevent incidents, as systems uncontrollably descend into critical states. This highlights the need for advanced risk assessment to provide early warning for safety controls. Therefore, beyond improving existing detection methods, we are fundamentally concerned with the question: How risk is a given system under stealth attacks?

\IEEEpubidadjcol

To quantify the risk of attacked systems, researchers study the responses of attacked systems to stealth attacks\cite{Yang2022A}. They use the performance indicators of the attacked systems as safety metrics, such as estimator error\cite{sui2020vulnerability}, recoverability\cite{liu2019joint}, reliability margin\cite{akametalu2014reachability}, sensitivity\cite{teixeira2019optimal}, resilience\cite{khazraei2022resiliency}, and reachability\cite{murguia2020security}. Among these metrics, reachability is particularly useful because it can formally describe the behavior of systems under attack. Murguia et al.\cite{murguia2020security} used the size of the attacker's reachable set and its minimum distance to the critical state set as a metric. Kwon et al.\cite{kwon2017reachability} assessed safety by comparing the reachable set and the safe region. Hwang et al.\cite{hwang2023lmi} quantified the potential risk of multi-agent systems under attack using reachability analysis. The reachability of attacked systems is analyzed using different set representations, including intervals\cite{fan2021improved}, ellipsoids\cite{zhang2020reachability, murguia2020security, hashemi2018comparison, kwon2017reachability}, and zonotopes\cite{liu2021reachability}. Note that these studies are built on the assumption of linear systems, and risk assessment for attacked nonlinear systems remains a challenge. The linear assumption limits their applicability in the real world, where most physical systems are inherently nonlinear\cite{wang2014iss}. Furthermore, scenarios may raise external risks to CPSs, which are not considered in the above studies. In a given scenario, most risk arises from the attacked system‘s interactions with other objects\cite{lyu2019safety}. For instance, an autonomous vehicle under attack can lead to more severe accidents if multiple pedestrians surround the vehicle. 

Motivated by the above-mentioned challenges, this article proposes a novel framework for assessing the risk to nonlinear systems under attack, considering relevant scenarios. As shown in Fig. \ref{paperflow}, the framework consists of a Stealth Reachability Analysis (SRA) algorithm, a construction method, and a Reachability and Risk field-based (RR) metric. We define the reachability of a system under stealth attacks as the Attacker's Stealth Reachable (ASR) set. The ASR set contains all the states that attackers can induce the system to reach. However, it is challenging to compute the exact bounds of the ASR set, especially for nonlinear systems\cite{kwon2017reachability}. To approximate the ASR set, we propose the SRA algorithm using multiple standard set representations. To formally describe the risk distribution of the scenario, we introduce a concept of risk field constructed by risk sets, where risk sets consist of the critical regions and loss expectations of possible events that the attacked system may trigger. Finally, we quantify the risk value by checking if the ASR set intersects with risk sets, which is the proposed RR metric. The idea of this metric is that if the ASR set intersects with a risk set, attackers can stealthily manipulate the system state into the critical region without being detected. The corresponding event will occur and cause harm. This way, the assessed risk value synthesizes the impact of system dynamics, noise, attacks, and scenarios on safety. In summary, the main contributions of this work are:
\begin{enumerate} 
\item{This work extends the reachability analysis of systems under stealth attacks to nonlinear systems. We propose an algorithm to approximate the ASR set. } \label{contribute1}
\item{We introduce a risk field concept to formally describe a scenario's risk distribution and propose a method for constructing the risk field.} \label{contribute2}
\item{We propose a dynamic metric based on the ASR set approximation and the risk field, which can dynamically quantify the risk for attacked systems within given scenarios.} 
\end{enumerate}

\begin{figure*}
  \centering
  \includegraphics[width=0.75\linewidth]{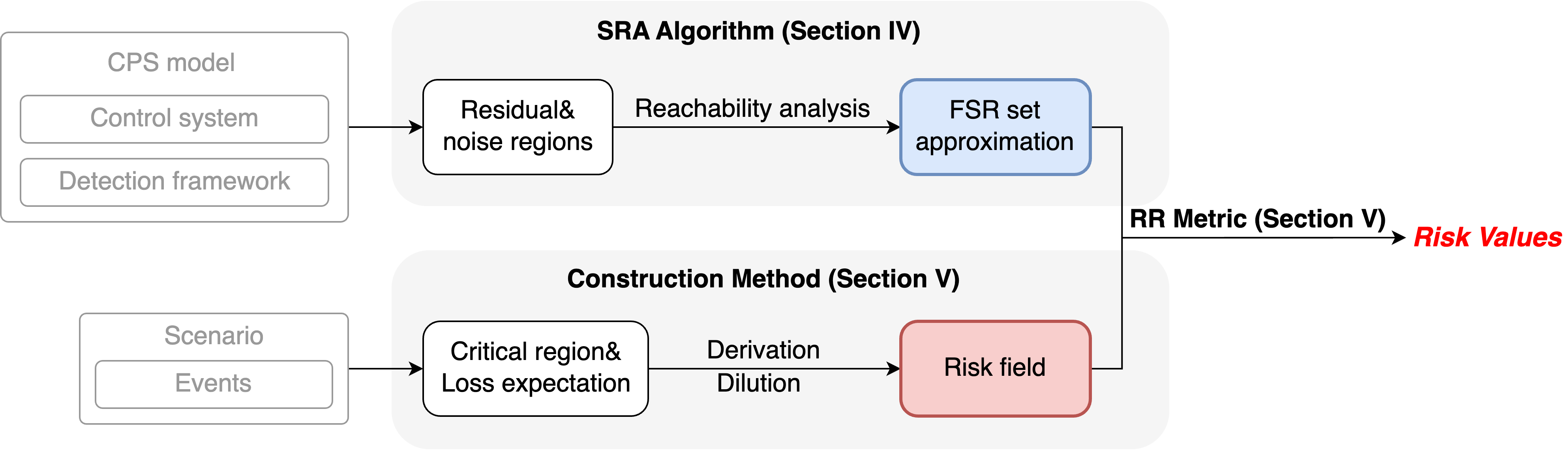}
  \caption{The framework to assess the risk of systems under stealth attacks.}
  \label{paperflow}
\end{figure*}

The structure of this article is as follows: In Section \ref{System description}, we describe the nonlinear system model under analysis, which includes aspects of the control system, state estimation, and detection. Section \ref{Analysis algorithm} introduces the algorithm used to approximate the ASR set, detailing the preliminaries, reachability analysis, and an overview of the algorithm. In Section \ref{Construction method and Metric}, we elaborate on the construction method for risk fields and present our proposed metric. Section \ref{Illustrative case} provides an illustrative case and a numerical simulation to demonstrate the effectiveness of our framework. Finally, we conclude in Section \ref{Conclusion}.

\textit{Notations}: In this article, $\mathbb{R}^{n}$ denotes the $n$-dimensional Euclidean space. An $n$-dimensional vector is represented as $a=[a^{[1]}\ a^{[2]}\ \dots\ a^{[n]} ]\in \mathbb{R}^n$. An $n\times m$ matrix is denoted by $A\in\mathbb{R}^{n\times m}$, with $A^{[i][j]}$ indicating the element at the $i$-th row and $j$-th column. $\mathbf A \subseteq \mathbb{R}^{n}$ represents a set or region in $n$-dimensional space. An $n$-dimensional interval is denoted by $[a, b]=\{x\in \mathbb{R}^{n}| a\leq x \leq b\}$, where $a,b\in \mathbb{R}^n$. The Minkowski sum of two sets is represented by $\oplus$, e.g., $\mathbf A \oplus \mathbf B =\{a+b \mid a\in \mathbf A, b\in\mathbf B \}$. The vector $x_k$ denotes a discrete point at time $t=t_k$, and $x_{0:n}=[x_0, \dots, x_n]$ refers to a sequence of $x_k$. A function $f: \mathbb{R}^{n_x} \to \mathbb{R}^{n_y}$ is Lipschitz continuous with a constant $L\ge 0$ if $\|  f(x_1) - f(x_2) \|\le L \| x_1 - x_2 \|$ for any $x_1, x_2\in \mathbb{R}^{n_x}$. Similar to the approximation between values, we also use the symbol $\approx$ to denote the over-approximation between sets, e.g., $\mathbf{A}\approx \mathbf{B}$, where $\mathbf{A}\subseteq \mathbf{B}$.

\section{Problem formulation}\label{System description}
This section introduces the control system and detection approach that our reachability analysis is based on. 
\subsection{Control system}
Consider a noisy nonlinear system described as follows:
\begin{align}
  x_{k+1} &= f(x_k, u_k) + w_k\label{dynamic}\\
  y_{k+1} &= h(x_k) + v_{k}\label{sensor}
\end{align}
where $x_k\in \mathbb{R}^{n_x}$ is the system state, $u_k\in \mathbb{R}^{n_u}$ is the control input during time interval $t\in[t_{k}, t_{k+1}]\subseteq \mathbb{R}$, $w_k\in \mathbb{R}^{n_x}$ is the process noise with a covariance $P_{w}\in \mathbb{R}^{n_x\times n_x}$, $y_{k+1} \in \mathbb{R}^{n_y}$ is the sensor output, and $v_{k}\in \mathbb{R}^{n_y}$ is the measurement noise with a covariance $P_{v}\in \mathbb{R}^{n_y\times n_y}$. The dynamic function $f: \mathbb{R}^{n_x} \times \mathbb{R}^{n_u}\to \mathbb{R}^{n_x}$ and the sampling function $h: \mathbb{R}^{n_x} \to \mathbb{R}^{n_y}$ are Lipschitz continuous and differentiable of high order.

% Assume that $u_k$ is updated only at each time instance $t=t_k$ and remains constant throughout each time interval. This means that the non-noise part of \eqref{continuous_model} is an ordinary differential equation (ODE) within each time interval. We denote the initial state $x_k=x_k$ at  $t=t_k$, and the solution of  $\mathrm{d} x_k/  \mathrm{d} t = f(x_k,u_k)$ can be defined as  $x_k:=f(x_k, u_k, t)$ during $t\in[t_k, t_k+1]$.

In the presence of noise, the true system state $x_k$ is often inaccessible. Therefore, the controller uses the estimated state, $\hat{x}_k$, as its input. The controller is formulated as follows:
\begin{align}
  u_{k+1} &= g(\hat{x}_k)  \label{controller}
\end{align}
where $\hat{x}_k$ is the estimated state. Similarly, the function  $g:\mathbb{R}^{n_x}\to \mathbb{R}^{n_u}$ is Lipschitz continuous and differentiable of high order.

\subsection{state estimation}
We use the Unscented Kalman Filter (UKF) to estimate the system state as follows:
\begin{align}
  \hat{x}_{k} &= \hat{x}^-_{k}+ K_k(y_{k} - \hat y_k)\label{filter}
\end{align}
where $K_k\in \mathbb{R}^{n_x\times n_y}$ is the Kalman gain matrix, $\hat y_k\in \mathbb{R}^{n_y}$ is the predicted sensor output, and $\hat{x}^-_{k}\in \mathbb{R}^{n_x}$ is the prior predicted state. 

Note that non-standard distributions from nonlinear transformations are difficult to describe\cite{sarkka2007unscented, chambers1976method}. As a result, the state estimation relies on the Gaussian approximation assumption as follows:
\begin{assumption}[Gaussian approximation]
  Consider a general nonlinear function given by $y = f(x),\ x\sim \mathcal{N}(\mu_x, P_x) $ with $x \in \mathbb{R}^n$, $y \in \mathbb{R}^m$, and $f:\mathbb{R}^{n_x} \mapsto \mathbb{R}^{n_y}$. To approximate the non-standard distribution of $y$, we assume that $y$ follows a Gaussian distribution
  \begin{equation}
    \binom{x}{y}\sim \mathcal{N}\left(\binom{\mu_x}{\hat \mu_y},\begin{pmatrix}
      P_x &\hat P_{xy} \\
      \hat P_{xy}^T & \hat P_y
     \end{pmatrix} \right)  
  \end{equation}
\end{assumption}
where $\hat\mu_y=\mathrm{E}(y)$, $\hat P_y=\mathsf{Var}(y)$, and $\hat P_{xy}=\mathsf{Cov}(x,y)$ are the estimated mean, variance, and covariance.

Based on the Gaussian approximation, the parameters in \eqref{filter} are iterated by
\begin{align}
    [\hat{x}^-_{k}, \hat P_{x_{k}}^-, \hat P_{y_{k}}, \hat P_{xy_{k}}] &= \mathsf{UT}_{u_k,P_{w},P_{v}}(\hat{x}_{k-1}, \hat P_{x_{k-1}})\label{UT}\\
  K_k &=\hat P_{xy_{k}}(\hat P_{y_{k}})^{-1}\label{UKF-1}\\
  \hat P_{x_{k}} &= \hat P_{x_{k}}^- - K_{k} \hat P_{y_{k}} K_{k}^T\label{UKF-2}
\end{align}
where $\hat P_{x_{k}}^-$ is the prior predicted variance of $x_k$, $\hat P_{x_{k}}$ is the estimated variance of $x_k$, $\hat P_{y_{k}}$ is the predicted variance of $y_k$, $\hat P_{xy_{k}}$ is the predicted covariance of $x_k$ and $y_k$, and $\mathsf{UT}$ denotes the Unscented Transform (UT) algorithm. Generate the sigma points set $\sigma_x$ as:
\begin{equation}
        \begin{split}
      \sigma^{[0]} &\!=\! \hat{x}_{k-1}\\
      \sigma^{[i]} &\!=\! \hat{x}_{k-1}\!+\!\sqrt{(n_x\!+\!\lambda)} \sqrt{\hat P_{x_{k-1}}}^{[i]}\\
      \sigma^{[n_x+i]} &\!=\! \hat{x}_{k-1}\!-\!\sqrt{(n_x\!+\!\lambda)} \sqrt{\hat P_{x_{k-1}}}^{[i]}
     \end{split} \ \  i \!=\! 1, \dots, n_x
\end{equation}
where $\lambda$ serves as a scaling parameter. The notation $\sqrt{P}$ represents the Cholesky factorization \cite{golub2013matrix} of the covariance matrix $P$. The process of  $\mathsf{UT}$ can be given as follows:
\begin{align}
      \hat{x}^-_{k}&\!=\!\sum^{2n_x}_{i=0} \omega^{[i]}f(\sigma_x^{[i]},u_k)\label{UT2-1}\\
      \!\hat P_{x_{k}}^- &\!=\! \sum^{2n_x}_{i=0} \! \nu^{[i]}\!\left(f(\!\sigma_x^{[i]}\!,\!u_k\!)\!-\!\hat{x}^-_{k}\right) \! \left(f(\!\sigma_x^{[i]}\!,\!u_k\!)\!-\!\hat{x}^-_{k}\right)^T\!+\!P_{w}\label{UT2-2}\\
      \hat P_{y_{k}} &\!=\! \sum^{2n_x}_{i=0} \nu^{[i]}\left(h(\sigma_x^{[i]})-\hat{x}^-_{k}\right)\!\left(h(\sigma_x^{[i]})-\hat{x}^-_{k}\right)^T\!+\!P_{v}\!\label{UT2-3}\\
      \hat P_{xy_{k}}&\!=\! \sum^{2n_x}_{i=0} \nu^{[i]}\!\left(\!\sigma_x^{[i]}\!-\!\hat{x}_{k-1}\!\right)\!\left(\!h(\sigma_x^{[i]})\!-\!\hat{x}^-_{k}\!+\!\sqrt{ P_v}^{[i]}\!\right)^T\label{UT2-4}
\end{align}
where $\omega$ and $\nu$ are the associated weights (see \cite{sarkka2007unscented}). For simplicity, we denote the above equations \eqref{UT2-1}, \eqref{UT2-2}, \eqref{UT2-3}, and \eqref{UT2-4} by $ f_{\hat{x}_{k}^-}$, $f_{\hat{P}_{x_{k}}^{-}}$, $h_{\hat{P}_{y_{k}}}$, and $h_{\hat{P}_{xy_{k}}}$ respectively. More detailed information about the UKF and UT algorithm can be found in\cite{sarkka2007unscented}.

\subsection{Detection}
As a representative example, we utilize a chi-square detector to detect attacks in our system model. Note that our analysis can be extended to other detectors with suitable adjustments. We introduce a chi-square detector based on the residual, which is a simplified version from \cite{wu2023double}. 

Define the residual 
\begin{align}\label{residual}
    r_k:=y_{k} - \hat y_k
\end{align}
with the estimated sensor output $\hat{y}_{k}=h(\hat{x}_{k-1})$. Julier et al. \cite{julier2004unscented} and Zhang at al. \cite{zhang2020reachability} have demonstrated that the Kalman filter provides a minimum variance unbiased estimation for system states. Under the assumption of Gaussian approximation in nonlinear systems, $\hat{x}_{k-1}$ serves as the unbiased estimate of $x_{k-1}$. When the filter stabilizes in the absence of attacks, the mean and variance of $r_k$ are $\mu_{r_k}=\mathrm{E}({y}_k - \hat y_k)\approx\mathbf{0}$ and ${P}_{r_k}=\mathrm{Var}({y}_k - \hat y_k)\approx\hat{P}_{y_k}$, respectively. Therefore, if the UKF stabilizes without attacks, the following conclusion is approximately satisfied $r_k\sim \mathcal{N}(\mathbf{0},\hat{P}_{r_k})$. Based on this conclusion, we can detect attacks by testing the following two incompatible statistical hypotheses:
\begin{equation}
  \label{hypotheses}
  \begin{cases}
    \mathcal{H}_0: {r}_k \sim \mathcal{N}(\mathbf{0}, \hat P_{r_k}),\ \text{Attack-free},\\
    \mathcal{H}_1: {r}_k \nsim \mathcal{N}(\mathbf{0}, \hat P_{r_k}),\ \text{Attacked}.
  \end{cases}
\end{equation}
Based on this statistical hypothesis, we define a chi-square detector
\begin{equation}
  \label{alert}
  Alarm_k=\left\{\begin{matrix}
    0 \ \mathrm{and}\ \mathrm{accept}\ \mathcal{H}_0,&\mathrm{if}\ {r}_{k}^T \hat{P}_{r_k}^{-1} {r}_{k}\le \varepsilon_r\\ 
    1 \ \mathrm{and}\ \mathrm{accept}\ \mathcal{H}_1,&\mathrm{if}\ {r}_{k}^T \hat{P}_{r_k}^{-1} {r}_{k}> \varepsilon_r  \end{matrix}\right.
\end{equation}
with a given threshold $\varepsilon_r\in\mathbb{R}$. We assume that $\varepsilon_r=\chi^2_{n_y}(1-\alpha)$ , where $\alpha\in\mathbb{R}$ is the false alarm probability and $\chi^2_{n}$ is the cumulative distribution function of a $n$-dimensional chi-square distribution. 
The tolerance region $\mathbf{\Gamma}_k \subset \mathbb{R}^{n_y}$ of this detector can be represented as follows:
\begin{equation}
    \begin{split}
    \mathbf{\Gamma}_k&:= \{{r}_{k}\in \mathbb{R}^{n_y}| {r}_{k}^T \hat{P}_{r_k}^{-1} {r}_{k}\le \varepsilon_r \}
    \end{split}.
\end{equation}
The probability of this detector making a type I error is
  \begin{equation}
    \mathrm{Pr} ({r}_{k} \notin \mathbf{\Gamma}_k|\text{Attack-free})= \alpha.
  \end{equation}
The threshold value $\varepsilon_r$ determines the detector's sensitivity, which should be selected with a tolerable false alarm probability.

% In the next subsection, we explore how the above detection framework can constrain attacks. More specifically, how much impact could attacks have on the described system without being detected? 

% \section{Attacker stealth reachable set}\label{ASR sets}
% This section introduces the attack model employed in this paper and presents the definition of the ASR set.
\subsection{The system under stealth attacks}\label{Stealth attacks}
Stealth attacks pose a major threat to networked CPS. As shown in Fig. \ref{CPS_attacks}, Attackers can disrupt state estimation by injecting false data into sensors. The increasing error between estimated and true states may cause incorrect control commands, pushing the system into unsafe conditions and raising the risk of accidents. We base our analysis on an assumption about attacker capabilities as follows:

\begin{figure}
  \centering
  \includegraphics[width=1.0\linewidth]{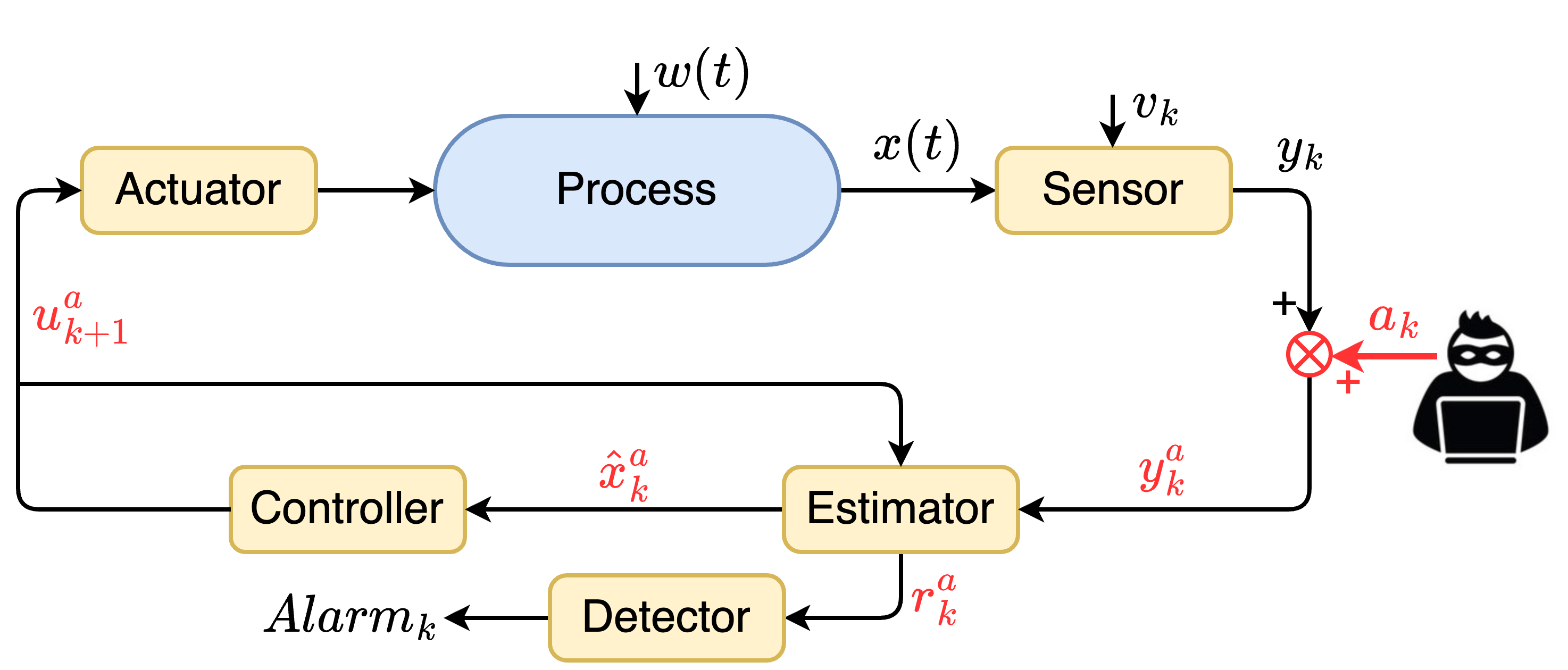}
  \caption{An attacked system architecture: by injecting false data into the measurements, an attacker can intrude into the system without being detected.}
  \label{CPS_attacks}
\end{figure}

\begin{assumption}[Attackers' Ability]\label{Attackers' Ability}
The attacker has complete knowledge of the system model and detection framework, and they can obtain real-time values of all system variables instantly. Additionally, attackers have almost unlimited computing resources, which they can use to design attack sequences that best serve their goals.
\end{assumption}

This assumption implies that the attackers can maximize the effectiveness of their attacks while remaining undetected\cite{kwon2017reachability}. We assume that the system is attacked during $t\in [t_1,t_n]$, and the state estimation has operated for a sufficient time (i.e., the UKF is already steady before $t=t_0$).
The system under additive sensor attacks can be characterized as follows:
\begin{equation}\label{attacked system}
\begin{cases}
  x_{k+1}^a = f(x_k^a, u_k^a) + w_k\\
  y_{k+1}^a = h(x^a_k) + v_{k}+ a_k\\
  \hat{x}^a_{k} = \hat{x}^{a-}_{k}+ K^a_kr^a_k\\
  u^a_{k+1} = g(\hat{x}^a_k)
\end{cases}
\ \ k=1,\dots, n
\end{equation}
where $a_{k}\in\mathbb{R}^{n_y}$ symbolizes the attack vector, which is injected into the sensors at time $t=t_k$. The superscript $^a$ is used to denote variables affected by the attacks.

We can rewrite the residual of the attacked system as $r^a_{k-1}=h(x^a_k)-h(\hat{x}^a_k)+v_k+a_k$. The attack influences the residual distribution, implying that the detector constrains the attacker’s impact on the system. Based on the detector described in \eqref{alert}, the set of stealth attacks is given as:
\begin{align}
    \mathbf{\mathcal{A}}_{0:n}=\left \{ a_{1:n}\in \mathbb{R}^{n_y\times n} \left|  r_k^a\in \mathbf{\Gamma}_k,k=1,\dots n \right .\right \}. 
\end{align}
This equation implies that our analysis does not need to consider the specific strategy used by attackers. Instead, we focus our analysis on the region of $r^a_k$.

Stealth attacks pose a more significant threat than detected attacks because they are challenging to detect. We analyze the system's reachability under stealth attacks in the following, which is the first step of our risk assessment.

\section{Reachability Analysis}\label{Analysis algorithm}
This section discusses the set theory, ASR set over-approximation, and the overview of our SRA algorithm.

\subsection{Preliminaries}\label{Set over-approximation}
To simplify the complex reachability analysis, we employ standard set representations in our SRA algorithm (as shown in Fig. \ref{approximate_algorithm}) to over-approximate the non-standard reachable sets.

\subsubsection{Set representation}
We use Zonotope for visualization and intersection detection due to its computational efficiency. A zonotope is defined as the image of a hypercube following an affine transformation. It is typically characterized by the Minkowski sum of its line segment generators.

\begin{definition}[Zonotope]
Denote the $n$-dimensional unitary interval $\mathbf{B}^m=[-1, 1]^m$. A zonotope $\mathcal{Z}\subset \mathbb{R}^{n}$ with $m$ generators $g_i\in \mathbb{R}^n$ ($i=1,2,\dots,m$) is defined as
\begin{equation}\label{zonotope_definition}
    \begin{split}
  \mathcal{\mathcal{Z}}&=<c, G>_Z\\
  &=\left\{c+Gx\in \mathbb{R}^n\left|x\in \mathbf{B}^m\right.\right\}
    \end{split}
\end{equation}
where $c\in \mathbb{R}^n$ is the center of  $\mathcal{Z}$ and $G=[g_1, g_2, \dots, g_m]\in \mathbb{R}^{n\times m}$ is the generator matrix. 
\end{definition}
The Minkowski sum and linear map of zonotopes have the following properties:
\begin{align}
  \!<\!c_1, H_1\!>_Z\!\oplus\! <\!c_2, H_2\!>_Z\! &= <\!c_1\!+\!c_2, [H_1\ H_2]\!>_Z\\
  L<c, H>_Z + b&= <Lc+b, LH>_Z\label{Zonotope_affine}
\end{align}
where $L\in \mathbb{R}^{l\times n}$ and $b\in \mathbb{R}^l$ are any map matrix and vector respectively.

Note that a zonotope is an interval when all of its generators are axial vectors. An interval also can be considered as a zonotope, e.g., $[a, b]=<{(a+b)}/{2}, \mathsf{diag}([{(b-a)}/{2}])>_Z$. 

Taylor model has been observed to scale quite well and produce accurate results for nonlinear systems\cite{chen2012taylor}. Hence, we use the Taylor model to approximate the reachability of the nonlinear dynamic. We use the definition of the Taylor model from \cite{chen2012taylor} as follows:

\begin{definition}[Taylor model]\label{Taylor model}
A Taylor model $\mathcal{TM}\subset \mathbb{R}^n$ of order $\kappa > 0$ is defined as follows:
\begin{equation}
    \begin{split}
            \mathcal{TM}&=<p,\mathbf{I}>_{TM}\\
    &=\left\{p(x)\in \mathbb{R}^{n_x}| x\in \mathbf{X}\right\}\oplus \mathbf{I}
    \end{split}
\end{equation}
where $p:\mathbb{R}^{m}\to \mathbb{R}^{n}$ is a polynomial function, $\mathbf{X}\subset \mathbb{R}^{n_x}$ is the domain, and $\mathbf{I}\subseteq \mathbb{R}^{n}$ is an interval used for remainder estimation. 
\end{definition}

Consider a function $y=f(x)$, where $f: \mathbb{R}^{n_x} \to \mathbb{R}^{n_y}$ is $\kappa$-order differentiable over a domain $\mathbf{X}$. Its $\kappa$-order Taylor expansion at a point $c\in \mathbf{X}$ can be given by
\begin{align}\label{Taylor expansion}
\begin{split}
        &f(x)=\underbrace{\sum_{i=0}^\kappa \sum_{j_1+\dots j_{n_x}=i}\left(\dfrac{f^{(i)}(c)}{i!}\prod_{l=1}^{n_x}(x^{[l]}-c^{[l]})^{j_l}\right)}_{p_f(x)} + \\
    &\underbrace{\sum_{j_1+\dots j_{n_x}=\kappa+1}\left(\dfrac{f^{(\kappa+1)}(\xi(x))}{(\kappa+1)!}_{x=c}\prod_{l=1}^{n_x}(x^{[l]}-c^{[l]})^{j_l}\right)}_{r_f(x)}
\end{split}
\end{align}
where $\xi(x)\in (c, x)$, the superscript $^{[l]}$  represents the $l$-th component of the vector, $p_f$ is the Taylor polynomial of $f$, $r_f$ is the remainder of $f$, and $f^{(i)}(c)$ represents the $i$-order derivative of $f$ at point $x=c$. $f^{(i)}(c)$ can be computed by
\begin{equation}
    \begin{split}
    f^{(i)}(c)=\left.\dfrac{\partial^i f(x)}{\partial [x^{[1]}]^{j_1}\dots\partial [x^{[n_x]}]^{j_{n_x}}}\right|_{x=c}.
\end{split}
\end{equation}
If there exists an interval $\mathbf{I}_f$ that satisfies $r_f(x)=f(x)-p_f(x)\in  \mathbf{I}_f$ for all $x \in \mathbf{X}$, then $<p_f, \mathbf{I}_f>_{TM}$ is an over-approximation of $f$ on $\mathbf{X}$, denoted as $f(\mathbf{X}) \subseteq p_f(\mathbf{X}) \oplus \mathbf{I}_f$. The interval $\mathbf{I}_f$ severs as the over-approximation of the remainder $r_f(x)$ over $\mathbf{X}$. It can be computed using interval arithmetic interval arithmetic (see \cite{moore2009introduction} for details), denoted by $\mathbf{I}_f = \mathsf{Int}(p_f)$. Next, we discuss the binary operations of the Taylor model from \cite{chen2015reachability}.

\begin{lemma}[Taylor model arithmetic]\label{Taylor model arithmetic}
    For any two Taylor models $<p_1, \mathbf{I}_1>_{TM}$ and $<p_2,\mathbf{I}_2>_{TM}$ over the same domain $\mathbf{X}$, the operation of sum and product between them can be easily computed by
\begin{align}
<p_1, \mathbf{I}_1>_{TM}+&<p_2, \mathbf{I}_2>_{TM}\nonumber\\
&= <p_1+p_2, \mathbf{I}_1\oplus\mathbf{I}_2>_{TM}\\
<p_1, \mathbf{I}_1>_{TM}\cdot &<p_2, \mathbf{I}_2>_{TM}\nonumber\\
&= <p_1\cdot p_2, \mathbf{I}_1\odot\mathsf{Int}(p_2)\oplus\mathbf{I}_2\odot\nonumber\\
        &\mathsf{Int}(p_1)\oplus\mathbf{I}_1\odot\mathbf{I}_2)>_{TM}\\
                <p_1, \mathbf{I}_1>_{TM} / &<p_2, \mathbf{I}_2>_{TM}\nonumber\\
    &=<p_1, \mathbf{I}_1>_{TM} \cdot <p_3, \mathbf{I}_3>_{TM}
\end{align}
where $<p_3, \mathbf{I}_3>_{TM}=p_{\frac{1}{x}}(p_2(\mathbf{X})\oplus\mathbf{I}_2)\oplus\mathsf{Int}(r_{\frac{1}{x}}(p_2(\mathbf{X})\oplus\mathbf{I}_2))$. 
\end{lemma}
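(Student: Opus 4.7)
The plan is to treat the three cases independently, in each case reducing the claimed set identity to an elementary algebraic expansion combined with an interval-arithmetic bound, and then invoking Definition \ref{Taylor model}. For the sum, I would pick arbitrary $y_i = p_i(x) + r_i$ with a common $x \in \mathbf{X}$ and $r_i \in \mathbf{I}_i$, and observe that $y_1 + y_2 = (p_1 + p_2)(x) + (r_1 + r_2)$ with $r_1 + r_2 \in \mathbf{I}_1 \oplus \mathbf{I}_2$ by definition of the Minkowski sum. That immediately gives the claim.

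For the product, I would expand $(p_1(x) + r_1)(p_2(x) + r_2) = p_1(x)p_2(x) + r_1\, p_2(x) + p_1(x)\, r_2 + r_1 r_2$, identify $p_1 \cdot p_2$ as the polynomial part, and bound the three residual cross-terms separately. Since $p_2(x) \in \mathsf{Int}(p_2)$ for every $x \in \mathbf{X}$ by construction of the range enclosure, the term $r_1\, p_2(x)$ lies in $\mathbf{I}_1 \odot \mathsf{Int}(p_2)$; symmetrically $p_1(x)\, r_2 \in \mathsf{Int}(p_1) \odot \mathbf{I}_2$; and $r_1 r_2 \in \mathbf{I}_1 \odot \mathbf{I}_2$ by definition of interval multiplication. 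Minkowski-summing these three enclosures reproduces the remainder interval stated in the lemma.

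For the quotient, I would first construct a Taylor model enclosure of the scalar map $t \mapsto 1/t$ restricted to the range enclosure $p_2(\mathbf{X}) \oplus \mathbf{I}_2$ of the denominator, by applying the Taylor expansion \eqref{Taylor expansion} and bounding its Lagrange-type remainder via $\mathsf{Int}(r_{1/x}(\cdot))$; this yields exactly the $<p_3, \mathbf{I}_3>_{TM}$ appearing in the statement. The quotient identity then follows by invoking the already proved product rule on $<p_1,\mathbf{I}_1>_{TM} \cdot <p_3, \mathbf{I}_3>_{TM}$.

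The hard part will be the reciprocal step: the Taylor expansion of $1/t$ only makes sense when the enclosure $p_2(\mathbf{X}) \oplus \mathbf{I}_2$ is bounded away from zero, so a well-posedness hypothesis is implicitly required and I will flag it. A secondary care point is that the product polynomial $p_1 \cdot p_2$ has degree up to $2\kappa$, exceeding the nominal order-$\kappa$ Taylor model of Definition \ref{Taylor model}; the standard remedy is to truncate $p_1 \cdot p_2$ to order $\kappa$ and absorb the dropped monomials into the remainder interval via $\mathsf{Int}(\cdot)$, which preserves the over-approximation $\approx$ in the sense introduced in the Notations paragraph without altering the structural form of the stated identity.
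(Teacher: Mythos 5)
Your argument is correct, but it is worth noting that the paper does not actually prove this lemma at all: its ``proof'' consists solely of a citation to the reference from which the definition is taken, so your write-up is a genuine substitute rather than a variant of the paper's reasoning. What you give is the standard Taylor-model arithmetic argument from that literature: pointwise expansion $f_i(x)=p_i(x)+r_i(x)$ with $r_i(x)\in\mathbf{I}_i$ over the shared domain, linearity for the sum, the four-term expansion with range enclosures $\mathsf{Int}(p_i)$ bounding the cross terms for the product, and reduction of division to composition of $t\mapsto 1/t$ with the denominator's range enclosure followed by the product rule. The two caveats you flag are real and are silently assumed by the paper: the quotient is only well-posed when $0\notin p_2(\mathbf{X})\oplus\mathbf{I}_2$ (otherwise $p_{\frac{1}{x}}$ and $r_{\frac{1}{x}}$ are undefined on the enclosure), and the product polynomial $p_1\cdot p_2$ generically has degree up to $2\kappa$, so one must either accept a higher-order Taylor model or truncate and absorb the dropped monomials into the remainder via $\mathsf{Int}(\cdot)$ --- the latter being what the cited implementation does and what the paper's over-approximation semantics ($\approx$ as set containment) requires. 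Making these hypotheses explicit strengthens rather than departs from the stated lemma.
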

\begin{proof}
    See \cite{chen2015reachability}.
\end{proof}

The two Taylor models share the same domain in the binary operations above. For Taylor models with different domains, operations are represented by $\oplus$, $\odot$, and $\oslash$. For example, $<p_1, \mathbf{I}_1>_{TM}\oplus<p_2, \mathbf{I}_2>_{TM}=p_1(\mathbf{X}_1)\oplus p_2(\mathbf{X}_2)\oplus\mathbf{I}_1\oplus\mathbf{I}_2$.

\subsubsection{Over-approximation}
This subsection introduces the conversion and over-approximation methods for these set representations.

\begin{proposition}[Zonotope to TM]
In a $n$-dimensional space $\mathbf{R}^n$, a zonotope can be converted to a $1$-order Taylor model when it has no more than $n$ non-axial generators.
\end{proposition}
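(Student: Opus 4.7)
The claim is constructive, so the natural approach is to exhibit an explicit conversion. I will split the generator matrix into axial and non-axial parts, absorb the axial generators into the remainder interval $\mathbf{I}$, and turn the non-axial generators into the linear (degree-one) part of a polynomial $p$; then verify that the resulting Taylor model is set-equal to the original zonotope.

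Concretely, start from $\mathcal{Z}=<c,G>_Z=\{c+Gx:x\in\mathbf{B}^m\}$ and reorder the columns of $G$ so that $G=[G'\ G'']$, where $G'\in\mathbb{R}^{n\times m'}$ collects the non-axial generators and $G''\in\mathbb{R}^{n\times (m-m')}$ collects the axial ones, with $m'\le n$ by hypothesis. Correspondingly write $x=(x_1,x_2)\in\mathbf{B}^{m'}\times\mathbf{B}^{m-m'}$, giving
\begin{equation*}
\mathcal{Z}=\{c+G'x_1:x_1\in\mathbf{B}^{m'}\}\oplus\{G''x_2:x_2\in\mathbf{B}^{m-m'}\}.
\end{equation*}
Because every column of $G''$ is axial, say its $j$-th column equals $\beta_j e_{i_j}$ for some standard basis vector $e_{i_j}$ and scalar $\beta_j$, the second summand is a coordinate-aligned box
\begin{equation*}
\mathbf{I}:=G''\mathbf{B}^{m-m'}=\bigtimes_{i=1}^{n}\Bigl[-\!\!\sum_{j:\,i_j=i}\!|\beta_j|,\ \sum_{j:\,i_j=i}\!|\beta_j|\Bigr]\ \subset\ \mathbb{R}^n,
\end{equation*}
which is a valid remainder interval in the sense of Definition \ref{Taylor model}.

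Next, define the degree-one polynomial $p:\mathbb{R}^{m'}\to\mathbb{R}^n$ by $p(x_1)=c+G'x_1$ with domain $\mathbf{X}=\mathbf{B}^{m'}$. This is an affine map, hence a polynomial of order $\kappa=1$, so $<p,\mathbf{I}>_{TM}=\{p(x_1):x_1\in\mathbf{X}\}\oplus\mathbf{I}$ is by construction a $1$-order Taylor model. Combining with the displayed decomposition above yields $\mathcal{Z}=<p,\mathbf{I}>_{TM}$, which is the desired conversion. The hypothesis $m'\le n$ is used precisely here: it guarantees that the input dimension of $p$ does not exceed the ambient dimension $n$, so the Taylor model conforms to the convention $\mathbf{X}\subset\mathbb{R}^{n_x}$ with $n_x\le n$ that subsequent operations (Lemma \ref{Taylor model arithmetic}, interval arithmetic on $\mathsf{Int}(p)$) presuppose.

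The only real subtlety, and the point I would be most careful about, is the role of the bound $m'\le n$. The algebraic identity above goes through for any $m'$, so if the proposition is read purely set-theoretically the hypothesis looks superfluous. I would therefore make explicit that "$1$-order Taylor model" here carries the dimensional convention from Definition \ref{Taylor model}, and note that when $m'<n$ one may harmlessly pad $G'$ with zero columns to view $p$ as a map $\mathbb{R}^n\to\mathbb{R}^n$ on $\mathbf{B}^n$ if an equal-dimension domain is preferred. Everything else (the axial/non-axial split, the Minkowski decomposition, and the identification of the box $\mathbf{I}$) is routine.
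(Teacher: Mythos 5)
Your proof is correct and follows essentially the same route as the paper's: split the generators into non-axial and axial parts, take the affine map $c+G'x_1$ as the degree-one polynomial, and absorb the axial generators into the remainder interval via the Minkowski decomposition. Your explicit computation of the box $\mathbf{I}$ and your remark on why the bound $m'\le n$ matters are more careful than the paper's version, but they do not constitute a different argument.
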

\begin{proof}
    Assume that a $n$-dimensional zonotope  $\mathcal{Z}=<c,[g_1,\dots, g_n, g_{n+1},\dots, g_m]>_{\mathcal{Z}}$, where $g_1,\dots,g_n\in\mathbb{R}^n$ are non-axial vectors, and $g_{n+1},\dots,g_m\in\mathbb{R}^n$ are axial vectors. According to the definition of zonotope in \eqref{zonotope_definition}, $\mathcal{Z}$ can be rewritten with Minkowski sum as
\begin{equation}
    \begin{split}
\mathcal{Z}
    &=\Bigg\{\underbrace{ c+\sum_{i=1}^n x^{[i]} g_i}_{c+G_1x}\in \mathbb{R}^n|x^{[i]}\in [-1,1]\big\}\\
    &\oplus\Bigg\{\underbrace{\sum_{i=n+1}^m x^{[i]} g_i}_{G_2x}\in \mathbb{R}^n|x^{[i]}\in [-1,1]\Bigg\}\\
    &=p(\mathbf{B}^n)\oplus\mathbf{I}
    \end{split}
\end{equation}
where $G_1=[g_1,\dots, g_n]$ is the $1$-order coefficient matrix of $p$, $G_2=[g_{n+1},\dots, g_m]$ is the parameters matrix of $\mathbf{I}$.
\end{proof}
For a zonotope with more than $n$ generators, we use the method from \cite{kopetzki2017methods} to reduce the number of its generators to $n$. Then, this zonotope can be converted to a Taylor model.

For a Taylor model of order $\kappa\ge 2$, we over-approximate it with a zonotope using the method from \cite{chen2012taylor}. The linear part of the Taylor model is converted to a zonotope. The remaining non-linear part (the sum of terms in the polynomial with order $2$ or above) is over-approximated as an interval using the interval algorithm. Finally, the Minkowski sum of the zonotope and interval is used as an over-approximation of the Taylor model, which is also a zonotope. We use $\mathcal{Z}=\mathsf{TM2Z}(\mathcal{TM})$ to represent this process.

Based on the above conclusion, we propose a method to over-approximate an ellipsoid with a Taylor model.

\begin{proposition}[Ellipsoid to Taylor model]\label{E2TM}
    A Taylor model can over-approximate an n-dimensional ellipsoid with complexity $\mathcal{O}(n^3)$.
\end{proposition}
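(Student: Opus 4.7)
The plan is to reduce the claim to the previous proposition (Zonotope to TM) by first sandwiching the ellipsoid between a trivially-described zonotope. I would start from the standard parametric form of an $n$-dimensional ellipsoid, $\mathcal{E} = \{c + Eu \in \mathbb{R}^n \mid \|u\|_2 \le 1\}$, where $c$ is the center and $E \in \mathbb{R}^{n\times n}$ is a square root of its shape matrix $Q$ (e.g.\ obtained by a Cholesky factorization of $Q$, as already invoked elsewhere in the paper). Extracting such an $E$ is what will dominate the cost.

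Next I would use the elementary inclusion $\{u\in\mathbb{R}^n \mid \|u\|_2\le 1\}\subseteq [-1,1]^n = \mathbf{B}^n$, which holds because $\|u\|_\infty\le \|u\|_2$. Applying the affine map $u\mapsto c+Eu$ to both sides preserves the inclusion, so
\begin{equation*}
\mathcal{E} \;\subseteq\; \{c+Eu \mid u\in \mathbf{B}^n\} \;=\; \langle c, E\rangle_Z,
\end{equation*}
i.e.\ the ellipsoid is contained in a zonotope whose $n$ generators are exactly the columns of $E$. Since this zonotope has at most $n$ (non-axial) generators, the previous proposition applies verbatim: $\langle c, E\rangle_Z$ can be expressed as a $1$-order Taylor model $\langle p,\mathbf{I}\rangle_{TM}$ on the domain $\mathbf{B}^n$ with $p(u)=c+Eu$ and $\mathbf{I}=\{0\}$. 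Composing the two inclusions yields the desired over-approximation $\mathcal{E}\subseteq \langle p,\mathbf{I}\rangle_{TM}$.

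For the complexity count I would simply add up the three steps. Computing the Cholesky factor $E$ of the $n\times n$ shape matrix is $\mathcal{O}(n^3)$; assembling the zonotope $\langle c,E\rangle_Z$ is free (its center and generator matrix are literally $c$ and $E$); and the conversion of a zonotope with $n$ generators to a $1$-order Taylor model, as described in the previous proposition, only rewrites the same $n\times n$ matrix into the coefficient matrix of a linear polynomial, which is $\mathcal{O}(n^2)$. The Cholesky step therefore dominates, giving the stated $\mathcal{O}(n^3)$ bound.

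The one thing worth checking carefully — and the only place I expect any subtlety — is making sure no generator-reduction step is needed before invoking the zonotope-to-Taylor-model proposition, since such a reduction could in principle dominate the complexity. Because the matrix square root $E$ has exactly $n$ columns, the generator count is already at the threshold allowed by that proposition, so the reduction procedure of \cite{kopetzki2017methods} is not triggered and the $\mathcal{O}(n^3)$ bound is preserved.
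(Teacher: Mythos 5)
Your proof is correct, but it reaches the zonotope enclosure by a different (and simpler) route than the paper. The paper's proof encloses the ellipsoid in the \emph{intersection} of two hyper-rectangles --- one aligned with the principal axes of $\mathcal{E}$ (obtained from the eigendecomposition $Q=P\Lambda P^T$, costing $\mathcal{O}(n^3)$) and one aligned with the coordinate axes (from the tangency points, costing $\mathcal{O}(n^2)$) --- and treats that intersection as a zonotope with $n$ non-axial plus $n$ axial generators before converting to a first-order Taylor model. You instead use the single parallelotope $c+E\,\mathbf{B}^n$ obtained from the inclusion $\{u:\|u\|_2\le 1\}\subseteq\mathbf{B}^n$, which immediately gives a zonotope with exactly $n$ non-axial generators and a remainder interval $\mathbf{I}=\{0\}$; your observation that no generator reduction is triggered is exactly the right thing to check. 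What the paper's construction buys is a tighter enclosure (the intersection with the axis-aligned bounding box trims the corners of the parallelotope); what yours buys is a cleaner argument --- in particular, it sidesteps the delicate claim that the intersection of two differently oriented boxes is itself a zonotope with the concatenated generator matrix, which is the shakiest step of the paper's version. Since the proposition only asserts existence of an $\mathcal{O}(n^3)$ over-approximation, not tightness, your looser enclosure suffices. One small imprecision: with the paper's convention $\mathcal{E}=\{x:(x-c)^TQ(x-c)\le 1\}$, the parametric form $x=c+Eu$, $\|u\|_2\le 1$, requires $E^TQE=I$, i.e.\ $E$ is a square root of $Q^{-1}$ (e.g.\ the inverse of the Cholesky factor of $Q$), not of $Q$ itself; this does not change the $\mathcal{O}(n^3)$ count or the structure of the argument.
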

\begin{proof}
Consider an ellipsoid $\!\mathbf{\mathcal{E}}\!=\!\left\{\!x\!\in\!\mathbb{R}^{n}\! \mid\! (x\!-\!c)^T Q (x\!-\!c)\!\le\! 1\!\right\}$. The shape matrix $Q\!=\!P\Lambda P^T$ is symmetric, where $P$ represents the matrix of eigenvectors and $\Lambda$ is the diagonal matrix of eigenvalues of $Q$. The ellipsoid $\mathbf{\mathcal{E}}$ can be over-approximated by a hyper-rectangle $\mathcal{H}_1=\{x\in\mathbb{R}^{n} \mid c+H_1x\}$ aligned with the axes of $\mathcal{E}$, where $H_1=(\Lambda^{-0.5}P)^T$ and $\Lambda^{-0.5}$ denotes the inverse square root of $\Lambda$. Concurrently, the ellipsoid can also be over-approximated by another hyper-rectangle $\mathcal{H}_2=\{x\in\mathbb{R}^{n} \mid c+H_2x\}$, with $H_2$ consisting of the tangency vectors $[h_1,\dots,h_n]$, each aligned with the $i$-th axis of the coordinate system. The intersection of $\mathcal{H}_1$ and $\mathcal{H}_2$ forms a zonotope $\mathcal{Z}$, centered at $c$, denoted as $\mathcal{Z}=\mathcal{H}_1\cap\mathcal{H}_2=\langle c,G\rangle_{\mathcal{Z}}$, where $G=[g_1,\dots, g_n, g_{n+1},\dots, g_{2n}]$. Here, $[g_1,\dots,g_n]$ are aligned with the axes of the ellipsoid, and $[g_{n+1},\dots,g_{2n}]$ with the coordinate system. Given that $\mathcal{E}\subseteq \mathcal{H}_1$ and $\mathcal{E}\subseteq \mathcal{H}_2$, by subset transitivity, we have $\mathcal{E}\subseteq \mathcal{Z}$. Ultimately, $\mathcal{Z}$ can be transformed into a first-order Taylor model $\mathcal{TM}$, as it has no more than $n$ non-axial generators.

The computation of $H_1$, $H_2$, and $G$ requires complexities of $\mathcal{O}(n^3)$, $\mathcal{O}(n^2)$, and $\mathcal{O}(n)$, respectively. Therefore, the overall computational complexity of this process is $\mathcal{O}(n^3)$. For brevity, we denote this process as $\mathcal{TM}=\mathsf{E2TM}(\mathcal{E})$.
\end{proof}

\subsection{Reachability analysis}\label{Estimated State Set}
% In this subsection, we introduce the detailed process of step \ref{System update}. The ultimate goal of this step is to predict the set $\mathbf{U}^a_k$, which represents the possible output region of the deceived controller. As shown in equation (\ref{state_set_approximation}), the iteration of the state set $\mathbf{X}^a_k$ depends on $\mathbf{U}^a_k$ for each discrete time step. Therefore, accurately predicting $\mathbf{U}^a_k$ is crucial for the success of the whole process.
% $\mathbf{U}^a_k$ is defined as
% \begin{equation}\label{controller_set}
% \begin{split}
%     &\mathbf{U}^a_k := \\
%     &\left\{u^a_k\in \mathbb{R}^{n_u} \left| 
%     \begin{split}
%       &u^a_{k} = g(\hat{x}^a_{k-1}) \\
%       &\hat{x}^a_{k} = \hat{x}^{a-}_{k}+ K_kr^a_k\label{attacked_filter}\\
%       &\hat{x}^{a-}_{k}\in \mathbf{X}^{a-}_{k}, r^a_k \in \mathbf{\Gamma}_k, w_k\in \mathbf{W}_k.
%     \end{split} \right.\right\}
% \end{split}
% \end{equation}
% where $\mathbf{X}^{a-}_{k}$ is the set of  states predicted by the UKF.  

% For a system that utilizes the specified chi-square detector, the state estimation error is either limited or the attacker is detected \cite{jovanov2019relaxing}. To quantify the impact of stealth attacks on the system, we will analyze the reachability of the attacked system.

In control theory, reachability is defined as the system’s ability to transition from one state to another within a finite time. However, the direct computation of its precise boundaries often presents a challenge. To analyze the reachability of the system under stealth attacks, we theoretically outline the definition of the ASR set.

\begin{definition}[ASR set] \label{ASR set}
    Assuming that the time interval that the system is under attack is $\mathbf{T}^a_{1:n}:=[t_1, t_n]$ and the initial state is $x_0\approx \hat{x}_0$, then the ASR set can be defined by
\begin{equation}\label{continuous_ASR_set}
\begin{split}
  &\mathcal R^a_{f}(\mathbf{X}_0, \mathbf{T}^a_{1:n}) :=\\
    &\left\{x^a\in \mathbb{R}^{n_x} \left| \begin{split}
      &\mathrm{Eq.}\ \eqref{attacked system},\\
      &x^a_0=x_0, w_k\in \mathbf{W}_k ,{r}^a_{k}\in \mathbf{\Gamma}_k,\\
      &\text{and}\ k=1,2,\dots, n.
      \end{split} \right.\right\}
\end{split}
\end{equation}
where $\mathbf{W}_k$ is the region of  process noise. 
\end{definition}

Note that systems with random variables have infinite reachability because unbounded randomness allows the system to be in any state (although with a small probability). To solve this problem, researchers usually define reachable sets based on statistical properties, i.e., limit the value range of random variables with a threshold probability \cite{kwon2017reachability, zhang2020reachability}. In this article, we recommend that $\mathbf{W}_k:=\{w_k\in \mathbb{R}^{n_x}|w_k^TP_{w}w_k\le \varepsilon_w\}
$ and $\varepsilon_w=\chi^2_{n_x}(\beta)$. In this case, we analyze the reachability under the noise conditions with the probability $\beta$.

The definition of \eqref{continuous_ASR_set} is theoretical, and its precise bounds are difficult to determine. To simplify the approximation of the ASR set, we redefine it with discrete time steps. 
\begin{theorem}\label{state_set}
    Assume that $\mathbf{U}^a_k$ is the set of control inputs at $t=t_k$. The ASR set can be over-approximated by
\begin{align}
     \mathcal R^a_{f}(\mathbf{X}_0, &\mathbf{T}^a_{1:n})\approx \bigcup_{k=1}^{n}  p_f(\mathbf{X}^a_{k}, \mathbf{U}^a_k)\oplus \mathbf{I}_f \oplus \mathbf{W}_k
\end{align}
where $p_f$ is the Taylor polynomial of $f$ in \eqref{Taylor expansion}.
\end{theorem}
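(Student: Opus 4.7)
The plan is to reduce the continuous-time definition in \eqref{continuous_ASR_set} to a stepwise, Minkowski-sum form that matches the claimed right-hand side, and then apply the Taylor-model approximation from \eqref{Taylor expansion} at each step. First I would observe that every state appearing in $\mathcal R^a_{f}(\mathbf{X}_0, \mathbf{T}^a_{1:n})$ is, by the attacked dynamics in \eqref{attacked system}, of the form $x^a_{k+1} = f(x^a_k, u^a_k) + w_k$ for some $k \in \{0,\dots,n-1\}$, with $w_k \in \mathbf{W}_k$ and $(x^a_k, u^a_k)$ consistent with all the stealth and estimator constraints up to step $k$. Packaging these admissible predecessors into reachable sets $\mathbf{X}^a_k$ and $\mathbf{U}^a_k$, one obtains
\begin{equation*}
\mathcal R^a_{f}(\mathbf{X}_0, \mathbf{T}^a_{1:n}) = \bigcup_{k=1}^{n} \bigl\{\, f(x^a_k, u^a_k) + w_k \;\big|\; x^a_k \in \mathbf{X}^a_k,\; u^a_k \in \mathbf{U}^a_k,\; w_k \in \mathbf{W}_k \,\bigr\}.
\end{equation*}

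Next I would over-approximate $f$ on each joint domain $\mathbf{X}^a_k \times \mathbf{U}^a_k$ using the Taylor expansion \eqref{Taylor expansion}. Since $f$ is differentiable of high order, we may write $f(x, u) = p_f(x, u) + r_f(x, u)$, and by the definition of the remainder interval $\mathbf{I}_f = \mathsf{Int}(p_f)$ (cf.\ Definition \ref{Taylor model}), we have $r_f(x, u) \in \mathbf{I}_f$ for every $(x, u) \in \mathbf{X}^a_k \times \mathbf{U}^a_k$. Consequently $\{f(x^a_k, u^a_k) : x^a_k \in \mathbf{X}^a_k,\, u^a_k \in \mathbf{U}^a_k\} \subseteq p_f(\mathbf{X}^a_k, \mathbf{U}^a_k) \oplus \mathbf{I}_f$, and appending the additive noise as a Minkowski sum gives the per-step over-approximation $p_f(\mathbf{X}^a_k, \mathbf{U}^a_k) \oplus \mathbf{I}_f \oplus \mathbf{W}_k$. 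Taking the union over $k = 1,\dots,n$ yields the claimed bound, where the symbol $\approx$ is used in the over-approximation sense defined in the \textit{Notations}.

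The main obstacle I expect is not the Taylor bookkeeping but justifying the existence and usability of the sets $\mathbf{X}^a_k$ and $\mathbf{U}^a_k$. These are implicitly defined through the coupled recursion involving the attacked UKF equations and the stealth constraint $r^a_k \in \mathbf{\Gamma}_k$, all of which pass through nonlinear maps $f$, $h$, $g$, and the unscented transform $\mathsf{UT}$. I would need to argue that at every step $k$, any admissible trajectory of the attacked closed loop projects $(x^a_k, u^a_k)$ into the domain on which $p_f$ and $\mathbf{I}_f$ are computed; otherwise the Taylor remainder bound is not valid. In the proof proper this is where one would invoke the forward-propagation scheme of the SRA algorithm (Section \ref{Analysis algorithm}) to produce outer approximations of $\mathbf{X}^a_k$ and $\mathbf{U}^a_k$ step by step, so that the Taylor-based over-approximation can be applied recursively and the final union indeed contains $\mathcal R^a_{f}(\mathbf{X}_0, \mathbf{T}^a_{1:n})$.
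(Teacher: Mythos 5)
Your proposal is correct and follows essentially the same route as the paper: decompose $\mathcal R^a_{f}(\mathbf{X}_0, \mathbf{T}^a_{1:n})$ into the union of one-step reachable sets $\bigcup_k \mathcal R^a_f(\mathbf{X}^a_k, t_{k+1})$, over-approximate $f$ on each domain $\mathbf{X}^a_k \times \mathbf{U}^a_k$ by its Taylor polynomial plus a remainder interval $\mathbf{I}_f$, and append the noise set via Minkowski sum. Your closing remark about needing the forward-propagated outer approximations of $\mathbf{X}^a_k$ and $\mathbf{U}^a_k$ for the remainder bound to be valid is exactly what the paper defers to the subsequent definitions of $\mathbf{U}^a_{k+1}$, $\hat{\mathbf{X}}^a_k$, and Proposition \ref{estimated states set Over-approximation}.
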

\begin{proof}
The reachable set of the system state at $t=t_{k+1}$ can be given as
\begin{equation}
\begin{split}
    \mathcal R^a_{f}&(\mathbf{X}_k^a, t_{k+1}) := \\
    &\left\{{x}^a_{k+1}\in \mathbb{R}^{n_x} \left| 
    \begin{split}
      &x^a_{k+1}=f(x^a_{k}, u^a_k) + w_k \\
      &x^a_k\in \mathbf{X}_k^a, u^a_k \in \mathbf{U}^a_k, w_k\in \mathbf{W}_k.
    \end{split} \right.\right\}.
\end{split}
\end{equation}
There exists an interval $\mathbf{I}_f$ such that: 
\begin{align}
    &f(x^a_k, u^a_k)+w_k\in p_f(x^a_k, u^a_k) + w_k \oplus \mathbf{I}_f\\
    &\forall x^a_k\in \mathbf{X}^a_k, \ u^a_k\in \mathbf{U}^a_k, and\ w_k\in \mathbf{W}_k.\nonumber
\end{align}
According to the definition of Taylor model approximation, there exists
\begin{align}
    \mathcal R^a_{f}(\mathbf{X}_k^a, t_{k+1})&=f(\mathbf{X}^a_{k}, \mathbf{U}^a_k)\oplus \mathbf{I}_f \oplus \mathbf{W}_k\nonumber \\
    &\subseteq p_f(\mathbf{X}^a_{k}, \mathbf{U}^a_k)\oplus \mathbf{I}_f \oplus \mathbf{W}_k.
\end{align}
Considering $\mathcal R^a_{f}(\mathbf{X}_0, \mathbf{T}^a_{1:n})= \bigcup_{k=1}^{n-1} \mathcal R^a_{f}(\mathbf{X}_k^a, t_{k+1})$, we can deduce this proposition.
\end{proof}
For simplicity, we denote $\approx$ represents the over-approximation process, e.g., $\mathbf{X}_{k+1}^a\approx p_f(\mathbf{X}^a_{k}, \mathbf{U}^a_k)\oplus \mathbf{I}_f \oplus \mathbf{W}_k$. 

The set of control inputs $\mathbf{U}^a_k$ is defined as
\begin{equation}\label{controller_set}
\begin{split}
    \mathbf{U}^a_{k+1} =
    \left\{u^a_{k+1}\in \mathbb{R}^{n_u} \left| 
    \begin{split}
      &u^a_{k+1} = g(\hat{x}^a_{k}), \hat{x}^a_{k}\in \hat{\mathbf{X}}^a_{k}.
    \end{split} \right.\right\}
\end{split}
\end{equation}
where $\hat{\mathbf{X}}^a_{k}$ is the set of  estimated states. We can approximate $\mathbf{U}^a_{k+1}$ defined in \eqref{controller_set} as
\begin{align}
    \mathbf{U}^a_{k+1}\approx p_g(\hat{\mathbf{X}}_k^a)\oplus\mathbf{I}_g.
\end{align}
$\hat{\mathbf{X}}^a_{k}$ can be defined as the following:
\begin{equation}\label{estimated states set}
\begin{split}
        \hat{\mathbf{X}}&^a_{k}=\left\{ \hat{x}^a_{k}\in \mathbb{R}^{n_x}\left|
    \begin{split}
       &\hat{x}^a_{k}=\hat{x}^{a-}_{k} + K^a_k r^a_k ,\\
       &\mathrm{Eq}.\ \eqref{UT}, \eqref{UKF-1}, \eqref{UKF-2}, \\
        &\hat{x}^{a}_{k-1}\in \hat{\mathbf{X}}^{a}_{k-1}, \hat{P}_{x_{k-1}}^a\in \hat{\mathbf{P}}_{x_{k-1}}^a,\\
        &\text{and}\ r^a_k \in \mathbf{\Gamma}_k.
    \end{split}
    \right.\right\}
\end{split}
\end{equation}
where $\hat{\mathbf{X}}^{a-}_{k}\subset \mathbb{R}^{n_x}$ is the reachable set of predicted state from the UT and $\mathbf{K}^a_k\subset \mathbb{R}^{n_x\times n_y}$ is the reachable set of Kalman gain matrix from the UKF. To approximate $\hat{\mathbf{X}}_{k+1}^a$, there is a critical issue that needs to be resolved: the reachability of UKF. In the following, we will introduce the approach to analyze the reachability of the UT.

\begin{proposition}
\label{estimated states set Over-approximation}
The estimated state set defined in \eqref{estimated states set} can be over-approximated as
\begin{align}
    \hat{\mathbf{X}}_{k}^a&\approx p_{\hat{x}_{k}^a}(\hat{\mathbf{X}}^a_{k-1},\hat{\mathbf{P}}_{x_{k-1}}^a, \mathbf{U}^a_{k-1}, \mathbf{B}^{n_y})\oplus \mathbf{I}_{\hat{x}_{k}^a}
\end{align}
with
\begin{align}
    p_{\hat{x}_{k}^a}&=p_{f_{\hat{x}_{k}^-}}+p_{h_{\hat{P}_{xy_{k}}}}\cdot p_{A^{-1}}(p_{h_{\hat{P}_{y_{k}}}})\odot p_{\mathbf{\Gamma}_k}\\
    \mathbf{I}_{\hat{x}_{k}^a}&=\mathbf{I}_{f_{\hat{x}_{k}^-}}\oplus \mathbf{I}_{h_{\hat{P}_{xy_{k}}}} \odot \mathsf{Int}(p_{A^{-1}}(p_{h_{\hat{P}_{y_{k}}}}))\odot \mathsf{Int}(p_{\mathbf{\Gamma}_k})\nonumber\\
    &\oplus \mathbf{I}_{A^{-1}(p_{h_{\hat{P}_{y_{k}}}})}\odot\mathsf{Int}(p_{h_{\hat{P}_{xy_{k}}}})\odot \mathsf{Int}(p_{\mathbf{\Gamma}_k})\nonumber\\
    &\oplus\mathbf{I}_{\mathbf{\Gamma_k}}\odot \mathsf{Int}(p_{h_{\hat{P}_{y_{k}}}}) \odot\mathsf{Int}(p_{h_{\hat{P}_{xy_{k}}}})\nonumber\\
    &\oplus \mathbf{I}_{h_{\hat{P}_{xy_{k}}}} \odot \mathbf{I}_{A^{-1}(p_{h_{\hat{P}_{y_{k}}}})}\odot\mathbf{I}_{\mathbf{\Gamma}_k}
\end{align}
where the subscript $p_{A^{-1}}$ denotes the polynomial function associated with the matrix inverse. $p_{f_{\hat{x}_{k}^-}(\cdot)}$, $p_{h_{\hat{P}_{y_{k}}}(\cdot)}$, and $p_{h_{\hat{P}_{xy_{k}}}(\cdot)}$ is the polynomial functions of  \eqref{UT2-1}, \eqref{UT2-3}, and \eqref{UT2-4}, respectively, defined over the domains $\hat{\mathbf{X}}^a_{k-1}$, $\hat{\mathbf{P}}_{x_{k-1}}^a$, and $\mathbf{U}^a_{k-1}$. It is also noted that the domain of $p_{\mathbf{\Gamma}_k}$ is $\mathbf{B}^{n_y}$.
\end{proposition}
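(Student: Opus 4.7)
The plan is to express $\hat{x}_k^a$ as an arithmetic combination of Taylor models defined on a common extended domain, then invoke the Taylor model arithmetic of Lemma \ref{Taylor model arithmetic} to assemble the over-approximation. Substituting the Kalman gain \eqref{UKF-1} into the update $\hat{x}_k^a = \hat{x}_k^{a-} + K_k^a r_k^a$, and using the UT shorthands introduced after \eqref{UT2-4}, the quantity of interest becomes the sum of $f_{\hat{x}_k^-}$ and the triple product $h_{\hat{P}_{xy_k}} \cdot [h_{\hat{P}_{y_k}}]^{-1} \cdot r_k^a$ with $r_k^a \in \mathbf{\Gamma}_k$. The task therefore reduces to enclosing each of these four objects by a Taylor model on a shared domain and then combining them via the arithmetic rules.

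First I would Taylor-expand the three UT maps $f_{\hat{x}_k^-}$, $h_{\hat{P}_{y_k}}$, $h_{\hat{P}_{xy_k}}$ on the domain $\hat{\mathbf{X}}_{k-1}^a \times \hat{\mathbf{P}}_{x_{k-1}}^a \times \mathbf{U}_{k-1}^a$ according to \eqref{Taylor expansion}, obtaining polynomials $p_{f_{\hat{x}_k^-}}, p_{h_{\hat{P}_{y_k}}}, p_{h_{\hat{P}_{xy_k}}}$ with interval remainders $\mathbf{I}_{f_{\hat{x}_k^-}}, \mathbf{I}_{h_{\hat{P}_{y_k}}}, \mathbf{I}_{h_{\hat{P}_{xy_k}}}$. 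The matrix inverse is then treated exactly as in the division case of Lemma \ref{Taylor model arithmetic}: one composes the Taylor polynomial $p_{A^{-1}}$ of the inverse map with $p_{h_{\hat{P}_{y_k}}}$ and bounds the resulting remainder by $\mathbf{I}_{A^{-1}(p_{h_{\hat{P}_{y_k}}})}$ via interval arithmetic. Finally, since $r_k^a$ lies in the ellipsoid $\mathbf{\Gamma}_k$, Proposition \ref{E2TM} supplies a Taylor model $\langle p_{\mathbf{\Gamma}_k}, \mathbf{I}_{\mathbf{\Gamma}_k}\rangle_{TM}$ over $\mathbf{B}^{n_y}$ that encloses it. Lifting all four Taylor models to the common extended domain $\hat{\mathbf{X}}_{k-1}^a \times \hat{\mathbf{P}}_{x_{k-1}}^a \times \mathbf{U}_{k-1}^a \times \mathbf{B}^{n_y}$ allows same-domain multiplications and additions, with $\odot$ and $\oplus$ used where domains originally differed.

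With all ingredients in place, $p_{\hat{x}_k^a}$ and $\mathbf{I}_{\hat{x}_k^a}$ follow by repeated application of the sum and product rules of Lemma \ref{Taylor model arithmetic}. The polynomial part is immediate: adding $p_{f_{\hat{x}_k^-}}$ to the triple product of polynomials yields exactly the stated $p_{\hat{x}_k^a}$. For the interval part, I would view the triple product as two successive binary products, namely $h_{\hat{P}_{xy_k}} \cdot \bigl(A^{-1}(h_{\hat{P}_{y_k}}) \cdot r_k^a\bigr)$. Each binary application contributes cross-terms in which a polynomial factor is enclosed by $\mathsf{Int}(\cdot)$ and multiplied against an interval remainder; the purely-interval contribution $\mathbf{I}_{h_{\hat{P}_{xy_k}}} \odot \mathbf{I}_{A^{-1}(p_{h_{\hat{P}_{y_k}}})} \odot \mathbf{I}_{\mathbf{\Gamma}_k}$ is retained directly, and the outer-sum remainder $\mathbf{I}_{f_{\hat{x}_k^-}}$ is added at the end, reproducing the four remainder contributions displayed in the proposition.

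The main obstacle I expect is the careful bookkeeping of the triple product: a fully expanded $(p_1+I_1)(p_2+I_2)(p_3+I_3)$ generates eight terms, whereas the stated $\mathbf{I}_{\hat{x}_k^a}$ has only four remainder lines, so some mixed cross-terms must be absorbed into broader $\mathsf{Int}(\cdot)$-enclosed groups to match the cleaner form; verifying that this grouping remains a valid over-approximation is the delicate part. A secondary technical point is justifying the Taylor expansion of the matrix inverse, which requires $\hat{P}_{y_k}^a$ to stay uniformly invertible over its reachable range so that $p_{A^{-1}}$ and its remainder are well-defined, a property that the nonsingularity of $P_v$ baked into \eqref{UT2-3} should guarantee in practice.
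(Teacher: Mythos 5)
Your proposal follows essentially the same route as the paper's own proof: rewrite the UKF update as $\hat{\mathbf{X}}_{k}^a=\hat{\mathbf{X}}_{k}^{a-}\oplus\hat{\mathbf{P}}^a_{xy_{k}}[\hat{\mathbf{P}}^a_{y_{k}}]^{-1}\odot \mathbf{\Gamma}_{k}$, enclose $\mathbf{\Gamma}_k$ via Proposition~\ref{E2TM}, and assemble the enclosure with the Taylor model arithmetic of Lemma~\ref{Taylor model arithmetic}; the paper states this in two sentences and you supply the intermediate bookkeeping it omits. Your flagged concern about the triple product is well taken --- a full binary expansion produces three double-interval cross-terms that do not appear in the displayed $\mathbf{I}_{\hat{x}_{k}^a}$ --- but that is an issue with the paper's stated formula rather than with your argument.
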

% , which are the functions of $\hat{x}_{k}^-$, $\hat{P}_{x_{k}}^-$, $\hat{P}_{y_{k}}$, and $\hat{P}_{xy_{k}}$ in UT respectively
\begin{proof}
    Reviewing the definition of the residual in \eqref{residual} and the attacked system in \eqref{attacked system}, the necessary and sufficient condition for attacks to remain stealthy is $r^a_{k+1}\in \mathbf{\Gamma}_{k}$. We can over-approximate $\mathbf{\Gamma}_{k}$ with a Taylor model as $\mathbf{\Gamma}_{k+1}\approx p_{\mathbf{\Gamma}_{k+1}}(\mathbf{B}^{n_y})\oplus\mathbf{I}_{\mathbf{\Gamma}_{k}}$ (see Proposition \ref{E2TM}). Then \eqref{estimated states set} can be rewritten as $\hat{\mathbf{X}}_{k+1}^a=\hat{\mathbf{X}}_{k+1}^{a-}+\mathbf{K}^a_{k+1}\odot \mathbf{\Gamma}_{k+1}$, where $\mathbf{K}^a_{k+1}=\hat{\mathbf{P}}^a_{xy_{k+1}}[\hat{\mathbf{P}}^a_{y_{k+1}}]^{-1}$. Applying the corresponding Taylor model over-approximation to it and using Lemma \ref{Taylor model arithmetic}, we can obtain the conclusion of Proposition \ref{estimated states set Over-approximation}. Additionally, $\hat{\mathbf{P}}^a_k$ can be updated in the same way, as $\hat{\mathbf{P}}^a_k=\hat{\mathbf{P}}_{x_{k+1}}^{a-} - \mathbf{K}^a_{k+1}\odot \hat{\mathbf{P}}_{x_{y+1}}^{a}\odot [\mathbf{K}^a_{k+1}]^T$.
\end{proof}

Proposition \ref{estimated states set Over-approximation} employs a lot of Taylor expansion processes, which leads to unnecessary computational consumption. However, as the filter stabilizes, the estimated state variance tends to be constant. We can simplify the computation using the following assumption.

\begin{assumption}\label{UKF_center}
    For the UKF, there are $\hat{P}_{x_k}^{a-}\approx \hat{P}_{c_k}^{a-}$ for all $\hat x^a_k\in \hat{\mathbf{X}}_k^a$ , where $c_k$ is the center point of $\hat{\mathbf{X}}_k^a$.
\end{assumption}

Based on this assumption, the result is not significantly different from that of Proposition \ref{estimated states set Over-approximation}. In this way, we can simplify the approximation of $\hat{\mathbf{X}}_k^a$ as follows: 
\begin{theorem}
    Under Assumption \ref{UKF_center}, the over-approximation in Proposition \ref{estimated states set Over-approximation} can be simplified to
\begin{align}\label{simplified estimated state set}
    \hat{\mathbf{X}}_{k+1}^a&\approx \hat{\mathbf{X}}_{k+1}^{a-} \oplus K^a_k \mathbf{\Gamma}_k\nonumber\\
    &=p_{f_{\hat{x}_{k+1}^-}}(\hat{\mathbf{X}}^a_{k},\hat{P}_{x_k}^a)\oplus K^a_k p_{\mathbf{\Gamma}_k}(\mathbf{B}^{n_y})\nonumber\\
    &\oplus\mathbf{I}_{f_{\hat{x}_{k+1}^-}}\oplus\mathbf{I}_{\mathbf{\Gamma}_k}
\end{align}
where $\hat{P}_{x_k}^a$ and $K^a_k$ are computed by UKF with $\hat{x}^a_k=\mathsf{center}(\hat{\mathbf{X}}_k^a)$, $p_{f_{\hat{x}_{k+1}^-}}$ is the Taylor polynomial of $f_{\hat{x}_{k+1}^-}$. The function $f_{\hat{x}_{k+1}^-}(\hat{\mathbf{X}}^a_{k},\hat{P}_{x_k}^a)$ can be given as
\begin{align}
    f_{\hat{x}_{k+1}^-}&(\hat{\mathbf{X}}^a_{k},\hat{P}_{x_k}^a)=f(\hat{\mathbf{X}}^a_{k},\mathbf{U}^a_{k}) + \nonumber\\
    &\sum^{n_x}_{i = 0} \omega^{[i]}f(\hat{\mathbf{X}}^a_{k}+\sqrt{(n_x+\lambda)}\sqrt{ \hat{P}_{x_k}^a}^{[i]},\mathbf{U}^a_{k})+\nonumber\\
    &\sum^{2n_x}_{i = n_x} \omega^{[i]}f(\hat{\mathbf{X}}^a_{k}-\sqrt{(n_x+\lambda)}\sqrt{ \hat{P}_{x_k}^a}^{[i]},\mathbf{U}^a_{k})
\end{align}
where $w^{[j]}$ and $\lambda$ are the parameters used in UT.
\end{theorem}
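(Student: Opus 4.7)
The plan is to specialise the general formula from Proposition \ref{estimated states set Over-approximation} under the extra information supplied by Assumption \ref{UKF_center}. Recalling the starting point $\hat{\mathbf{X}}_{k+1}^a=\hat{\mathbf{X}}_{k+1}^{a-}\oplus \mathbf{K}^a_{k+1}\odot \mathbf{\Gamma}_{k+1}$ used in the proof of that proposition, I would track what happens to each of its three factors separately.

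First, I would argue that $\mathbf{K}^a_{k+1}$ collapses to a single matrix. Because $\hat{P}^{a-}_{x_k}\approx \hat{P}^{a-}_{c_k}$ for every $\hat{x}^a_k\in \hat{\mathbf{X}}^a_k$, the UKF update equations \eqref{UKF-1}--\eqref{UKF-2} produce a deterministic gain $K^a_k:=\hat{P}^a_{xy_k}(\hat{P}^a_{y_k})^{-1}$ evaluated at the centre $c_k=\mathsf{center}(\hat{\mathbf{X}}^a_k)$, rather than a set-valued Kalman gain. Consequently the Taylor polynomials $p_{A^{-1}}(p_{h_{\hat{P}_{y_k}}})$ and $p_{h_{\hat{P}_{xy_k}}}$ degenerate to constants, and the nested Taylor-model products and interval cross-terms appearing in $\mathbf{I}_{\hat{x}_k^a}$ (three of its four lines) vanish. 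The only surviving contribution from the Kalman update step is $K^a_k\mathbf{\Gamma}_k$, which by the linear-map property \eqref{Zonotope_affine} together with Lemma \ref{Taylor model arithmetic} rewrites as $K^a_k p_{\mathbf{\Gamma}_k}(\mathbf{B}^{n_y})\oplus\mathbf{I}_{\mathbf{\Gamma}_k}$, with $K^a_k$ absorbed into the generators of the remainder interval.

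Second, I would handle the prior prediction $\hat{\mathbf{X}}_{k+1}^{a-}$. With the variance pinned to the single matrix $\hat{P}_{x_k}^a$, the sigma points reduce to $\hat{x}^a_k\pm\sqrt{(n_x+\lambda)}\sqrt{\hat{P}_{x_k}^a}^{[i]}$, parametrised only by $\hat{x}^a_k\in\hat{\mathbf{X}}^a_k$ and $u^a_k\in\mathbf{U}^a_k$. Substituting these into \eqref{UT2-1} yields exactly the closed-form expression $f_{\hat{x}_{k+1}^-}(\hat{\mathbf{X}}^a_{k},\hat{P}_{x_k}^a)$ stated in the theorem, and a Taylor expansion of this map over the compact domain $\hat{\mathbf{X}}^a_k\times\mathbf{U}^a_k$ about an appropriate centre gives $\hat{\mathbf{X}}_{k+1}^{a-}\approx p_{f_{\hat{x}_{k+1}^-}}(\hat{\mathbf{X}}^a_k,\hat{P}^a_{x_k})\oplus \mathbf{I}_{f_{\hat{x}_{k+1}^-}}$.

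Finally, taking the Minkowski sum of the two reductions reproduces the displayed over-approximation for $\hat{\mathbf{X}}^a_{k+1}$. The main obstacle I anticipate is not the algebra but the rigorous justification that the cross-term intervals from Proposition \ref{estimated states set Over-approximation}---such as $\mathbf{I}_{h_{\hat{P}_{xy_k}}}\odot\mathsf{Int}(p_{A^{-1}}(p_{h_{\hat{P}_{y_k}}}))\odot\mathsf{Int}(p_{\mathbf{\Gamma}_k})$---actually collapse, since Assumption \ref{UKF_center} is phrased only as $\hat{P}_{x_k}^{a-}\approx\hat{P}_{c_k}^{a-}$ rather than as a strict equality. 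I would make this precise by absorbing the residual variation of $\hat{P}^{a-}_{x_k}$ across $\hat{\mathbf{X}}^a_k$ into the already-present Taylor remainders $\mathbf{I}_{f_{\hat{x}_{k+1}^-}}$ and $\mathbf{I}_{\mathbf{\Gamma}_k}$, which is consistent with the ``$\approx$''/over-approximation convention adopted throughout the paper.
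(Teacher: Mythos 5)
Your proposal follows essentially the same route as the paper's own (one-sentence) proof: Assumption \ref{UKF_center} lets the set-valued covariance $\hat{\mathbf{P}}_{x_k}^a$ be replaced by the single matrix $\hat{P}_{x_k}^a$ computed at the centre, which collapses the Kalman gain to a constant and kills the cross-term intervals in Proposition \ref{estimated states set Over-approximation}. Your write-up is simply a more explicit and careful version of that argument, including the worthwhile observation that the assumption is stated as ``$\approx$'' rather than equality and that the residual variation must be absorbed into the remainder intervals.
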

\begin{proof}
Under Assumption \ref{UKF_center}, $\hat{P}_{x_k}^a$ should be regarded as a matrix of values rather than a matrix of sets. By replacing the Taylor model over-approximation of $\hat{\mathbf{P}}_{x_k}^a$ with $\hat{P}_{x_k}^a$, Proposition \ref{estimated states set Over-approximation} can be simplified to \eqref{simplified estimated state set}.
\end{proof}

In this subsection, we introduce the reachability analysis of ASR set. In the next subsection, we overview the algorithms used for reachability analysis.

\subsection{Algorithm overview}
The previous analysis is integrated as an algorithm. As shown in Fig. \ref{approximate_algorithm}, the algorithm consists of the following four steps, where steps 2 and 3 will be iterated.
%  Initialization, Gain Update, Control prediction, and state set approximation. By looping the last three steps, we can integrate the state set approximations.
% This set describes the reachability of the attacked system with probability $\beta$. Solving the exact bounds of Eq.(\ref{continuous_ASR_set}) is difficult. Therefore, we propose an algorithm to approximate it.

% Before assessing the risk of an attacked system, we must analyze its reachability beforehand. This section introduces the definition of the ASR set and the algorithm used for approximating it. To implement this iteration, we loop through two steps: approximating $\mathbf{X}_k^a$ and predicting $\mathbf{U}^a_k$. In this article, we use the Taylor model to approximate $\mathbf{X}_k^a$ and intervalize the UKF to predict $\mathbf{U}^a_k$. The analysis algorithm is described in Fig. \ref{approximate_algorithm}, and we will introduce it step by step in the following text.

\begin{figure}
  \centering
  \includegraphics[width=0.95\linewidth]{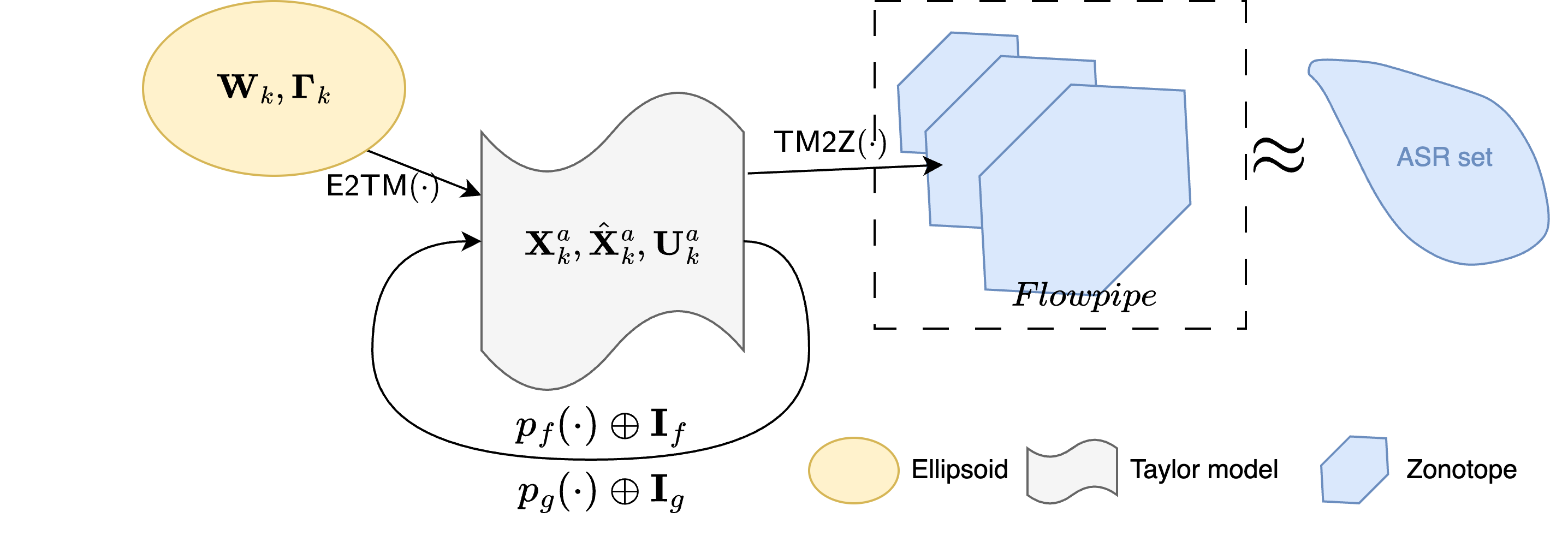}
  \caption{Algorithm overview: After initialization, we iterate and approximate the state set, and take the union of the approximated state sets as the approximation of the ASR set.}
  \label{approximate_algorithm}
\end{figure}

\subsubsection{\textbf{Step 1: Initialization}}\label{Initialization} 
Assume that the UKF has stabilized before reachability analysis. Using the parameters from UKF at $t=t_0$, i.e., $\hat{x}_0$ and $\hat{P}_{x_0}$, we can initialize the state set as follows:
\begin{align}
    [\hat{x}_{1}^-, \hat{P}_{x_1}^-, \sim] &= \mathsf{UT}_{f}(\hat{x}_0, \hat{P}_{x_0})\\
    K^a_0&=K_0\\
    \hat{\mathbf{X}}^a_1 &\approx \hat{x}_{1}^- \oplus K_{1}\cdot\mathsf{E2TM}(\mathbf {\Gamma}_{1})\\
    \mathbf{X}^a_1&\approx f(x_0, u_0)\oplus \mathsf{E2TM}(\mathbf{W}_0)
\end{align}
where $\mathsf{E2TM}$ represents the algorithm that over-approximates an ellipsoid to a Taylor model (see \ref{E2TM}).

% Before analyzing the reachability of the attacked system, it is important to provide our algorithm with an initial set of states. As previously assumed, the UKF without attack is stable before $t_0$, and the gain is $K_0$. Therefore, we can determine the region of the system state with probability $\beta$ as follows:
%   \begin{align}
%     [\hat{x}_{0}^-, \hat{P}_{x_0}^-, \sim] &= \mathsf{UT}_{f}(\hat{x}_0, \hat{P}_{x_0})\\
%     \hat{\mathbf{X}}_{0} &= \mathbf{X}_{0} = \mathsf{PE}_{\beta}(\hat{x}_0,\hat P_{x_0})\\
%     K^a_0&=K_0\\
%     \hat{\mathbf{X}}^a_0 &= \hat{\mathbf{X}}_{0} \oplus K_{0}\mathbf {\Gamma}^\alpha_{0}
%     % \mathbf{X}_0 &= \mathsf{PE}_{1-\alpha}(\hat{x}_k,\hat P_{x_k})\\
%   \end{align}
% where $\hat{\mathbf{X}}^a_0$ represents the initial set of states under attack, which serves as the initial input for our algorithm.

\subsubsection{\textbf{Step 2: UKF update}}\label{UKF_Update} We update the UKF parameters as follows:
\begin{align}
  \hat x^a_{k-1} &\!=\! \mathsf{center}(\hat{\mathbf{X}}_{k-1}^a) \\
  u^a_{k} &\!=\! g(\hat{x}^a_{k-1})\\
  [\hat{x}^-_{k}\!,\! \hat P_{x_{k}}^-\!,\! \hat P_{y_{k}}\!,\! \hat P_{xy_{k}}] &\!=\! \mathsf{UT}_{u_k,P_{w},P_{v}}(\hat{x}_{k}, \hat P_{x_{k}})\\
  K^a_{k} &\!=\! \hat P^a_{xz_{k}}(\hat{P}^a_{z_{k}})^{-1}\\
  \hat P^a_{x_{k}} &\!=\! \hat P^{a-}_{x_{k}}\!-\!K^a_{k} \hat P^a_{y_{k}} [K^a_{k}]^T
\end{align}
and then we update the estimated state set under attacks as follows:
\begin{align}
    \hat{\mathbf{X}}_{k+1}^{a-}&\approx p_{f_{\hat{x}_{k+1}^-}}(\hat{\mathbf{X}}^a_{k},\hat{P}_{x_k}^a)\oplus \mathbf{I}_{f_{\hat{x}_{k+1}^-}}\\
    \hat{\mathbf{X}}_{k+1}^{a}&\approx \hat{\mathbf{X}}_{k+1}^{a-} \oplus K^a_{k+1} \cdot \mathsf{E2TM}(\mathbf{\Gamma}_{k+1}).
\end{align}
\subsubsection{\textbf{Step 3: System update}}\label{System update} As described in Theorem (\ref{state_set}), we can over-approximate the ASR set by iterating the state sets as follows:
\begin{align}
    \mathbf{U}^a_{k}&\approx p_g(\hat{\mathbf{X}}_{k-1}^a)\oplus\mathbf{I}_g\\
    \mathbf{X}_{k+1}^a&\approx p_f(\mathbf{X}^a_{k}, \mathbf{U}^a_k)\oplus \mathbf{I}_f \oplus \mathsf{E2TM}(\mathbf{W}_k)
  \end{align}
\subsubsection{\textbf{Step 4: Integration}}\label{Integration} This step is the visualization of the Taylor Model and intersection detection difficulties. So, we over-approximate the results of Step 3 to Zonotopes. We define a list $flowpipe$ that integrates the iteration results of Step 3. as follows:
\begin{equation}\label{flowpipe}
  Flowpipe = \bigcup_{k=1}^{n} \mathsf{TM2Z}(\mathbf{X}^a_k) 
\end{equation}
where $\mathsf{TM2E}$ is the algorithm for over-approximating a Taylor model to a Zonotope \cite{chen2012taylor}. Then the ASR set can be approximated by the $flowpipe$ as
\begin{align}
    \mathcal R^a_{f}(\mathbf{X}_0, &\mathbf{T}^a_{1:n})\approx Flowpipe
\end{align}
Alg. \ref{ASR_approximation} provides the pseudocode for approximating the ASR set. Obtaining the Taylor polynomial is considered the most time-consuming operation in this algorithm.

\begin{algorithm} \SetKwData{Left}{left}\SetKwData{This}{this}\SetKwData{Up}{up} \SetKwFunction{Union}{Union}\SetKwInOut{Input}{input}\SetKwInOut{Output}{output}\SetKw{Push}{push}\SetKwIF{If}{ElseIf}{Else}{if}{then}{else if}{else}{end if}\SetKw{return}{return}

  \caption{SRA algorithm}\label{ASR_approximation} 
	\Input{
   Initialization parameters: $\hat{x}_0$, $\hat{P}_{x_0}$, $K_0$, $P_{w_{0:n}}$;\\
   Attacked System: \eqref{attacked system} and \eqref{alert};\\
   Taylor order: $\kappa$.\\
   }
	\Output{$Flowpipe$}
	 \BlankLine 

  $Flowpipe \leftarrow \emptyset$\;
  $x_0=\hat{x}_0$\;
  \emph{Treat the first step with initialization (see \ref{Initialization})}\;

	 \For{$k=1, \dots, n$}{
    % $\kappa \leftarrow$ Determine the order of the Taylor polynomial based on $\tau$ \;
    
    $p_f(x, u)+r_f(x, u) \  \leftarrow$ Compute the Taylor expansion of $f(x, u)$ with respect to the variables $x$ and $u$ (see \eqref{Taylor expansion})\;

 $p_g(x)+r_g(x) \  \leftarrow$ Compute the Taylor expansion of $g(x)$ with respect to the variables $x$ (see \eqref{Taylor expansion})\;

    $\hat{\mathbf{X}}_{k+1}^{a}\ \leftarrow$ Update the UKF parameters and estimated state set (see Step \ref{UKF_Update}) \;

    $\mathbf{U}^a_{k+1} \leftarrow$ Approximate the set of the controller output(see Sec. \ref{System update}) \;
    
    $\mathbf{X}_{k+1}^a \leftarrow$ Approximate the state set(see Sec. \ref{System update}) \;
  
    $Flowpipe \leftarrow Flowpipe \cup \mathsf{TM2Z}(\mathbf{X}^a_{k+1})$ \;
    }
    \return{$Flowpipe$}
\end{algorithm}

% According to the discrete model of Eq.(\ref{discrete model}), the state set at $t=t_{k+1}$ can be write as:

% \begin{equation}\label{attack_set}
% \begin{split}
%   &\mathcal R^a_{f}(\mathbf X_k, t_k) :=\\
%     &\left\{{x}_k\in \mathbb{R}^{n_x} \left| \begin{split}
%       &x_{k+1}=p_{f_k}(x_k, u^a_k, \delta t) + w_k \\
%       &y^a_{k} = h(x^a(t_{k})) + v_{k} + a_k, \\
%       &\hat{x}^a_{k} = \hat{x}^-_{k}+ K_k(y^a_{k} - \hat{y}_{k}), \\
%       &u^a_{k+1} = g(\hat{x}^a_k), \\
%       &x_k\in \mathbf{X}_k, w_k\in \mathbf{W}_k, v_k\in \mathbf{V}_k,\\
%       &a_k \in \mathbf{\mathcal{A}}_{1:k}.
%       \end{split} \right.\right\}
% \end{split}
% \end{equation}

\section{Construction method and RR Metric}\label{Construction method and Metric}
This section presents our construction method for the risk field and the proposed metric for the risk assessment.

\subsection{Scenario analysis}
Scenario analysis is crucial in a risk assessment\cite{cozzani2005assessment, li2022review, batrouni2018scenario}. A scenario is a hypothetical situation \cite{batrouni2018scenario} consisting of events that may lead to various future consequences. It can be represented by a list of events, such as
\begin{equation}
  Scenario = \{Event_i\},\ i=1, 2,\dots, n_{Event}.
\end{equation}

% and the $Event_i$ can be represented as
% \begin{equation}
%   Event_i = \<origin_i, [loss_1, probability_1], [loss_2, probability_2], \dots \>
% \end{equation}
% where $origin_i$ is the 
These events may have undesirable consequences, such as damage to the system and other objects in the scenario\cite{9090897}. It is necessary to identify the critical regions and estimate their risk of these events that an attacked system may trigger in a given scenario. In the case of a system under stealth attacks, accidents occur when its state falls into a critical region. For instance, if the turning speed of a vehicle falls within a high-value interval, the probability of a crash occurring is increased. Therefore, we take "the system state being in critical regions" as an origin event, which is the start of our scenario analysis. These events can result in different losses based on alternative actions of the system and other objects in the scenario. In this article, we describe the events in the scenario as shown in Fig. \ref{event}. Considering the correlation among the events, there are multiple types of structures such as parallel, tree, and net \cite{li2022review}.

Without loss of generality, we present the elicit process as follows:
\begin{equation}
  [Region_i, Risk_i] = \mathsf{Derive}_{\gamma_i}(Event_i)
\end{equation}
where $Region_i$ is the region of critical states, $Risk_i$ is the risk value of the $Event_i$, and $\gamma_i$ represents the probability that $Event_i$ occurs when the system state is in $Region_i$. The possible region of some events is infinite, so we use probability $\gamma$ to limit it. Meanwhile, the detailed steps of $ \mathsf{Derive}_{\gamma}$ can be written as follows.
\begin{enumerate}
  \item Define the scope of the scenario analysis, including the state space and time interval.
  \item Based on expert knowledge or empirical data, derive the critical regions $Region_i$ with probability $\gamma$. \label{step2}%直接安全关键变量引出，Zonotope投影至直接安全关键变量空间中。
  \item Take these critical regions as the original events of the scenario. Analyze the possible consequences of all the events. \label{step3}
  \item Determine the probability and severity of these consequences. Comprehensively evaluate $risk_i$ using the expected value of loss. \label{step4}
  \item Determine the correlation between events and describe it using a correlation matrix. \label{step5}
\end{enumerate}

In Step \ref{step2}), there are two ways to derive critical regions: analyzing which regions may trigger accidents based on the system model and deducing the source of accidents based on historical records. The derivation of critical regions needs to follow several criteria. Each critical region should correspond to only one event and be derived as accurately as possible without leaving room for error. For events with uncertain critical regions, the critical regions need to be predicted based on the given probability $\gamma$. For dynamic events, i.e., the events result in variables over time, their critical area is deformable and moveable.

Steps \ref{step3}), \ref{step4}), and \ref{step5}) fall outside the purview of this paper. There exist multiple efficient tools to accomplish these steps, including Event Tree Analysis (ETA) \cite{ferdous2009handling}, Hazard and Operability (HAZOP)\cite{dunjo2010hazard}, and Failure Modes and Effects Analysis (FMEA)\cite{chiozza2009fmea}. These tools offer qualitative and quantitative techniques that can assist analysts in identifying potential events and their associated risks in a given scenario\cite{sun2023contradictions}. To ensure comprehensive coverage and avoid omissions, we recommend employing a combination of top-down and bottom-up approaches.

\begin{figure}
  \centering
  \includegraphics[width=0.85\linewidth]{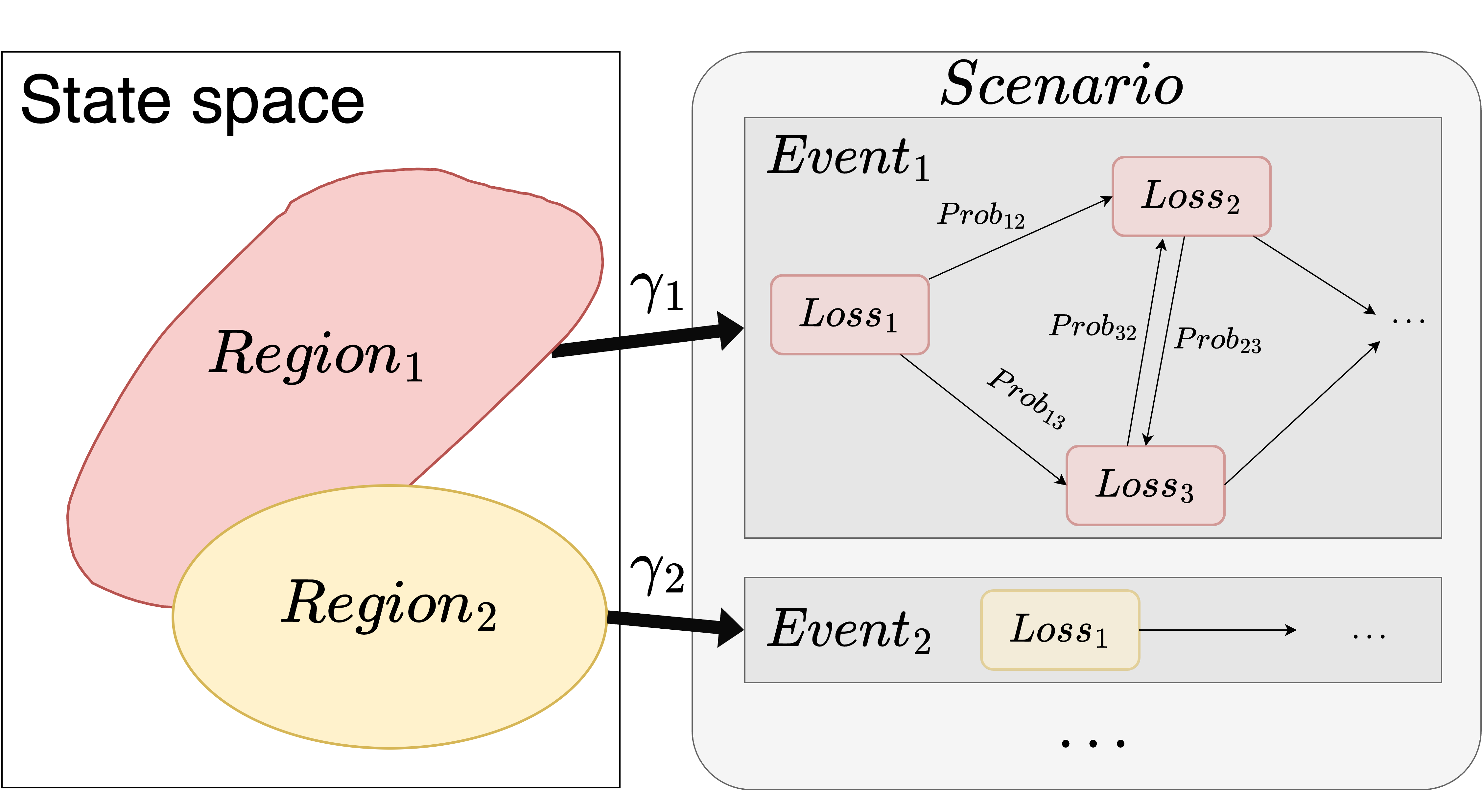}
  \caption{An event occurrence with a given probability $\gamma$ when the system state is in the critical region.}
  \label{event}
\end{figure}

\subsection{Risk field}\label{Risk field}
The risk field is a novel concept defined in this paper, which is used to describe the risk gradient in the state space. First, based on the result of the above scenario analysis, we denote the risk set as
  \begin{equation}
    \begin{split}
      Riskset_i:=\{ \mathbf{S}_i, R_i\}
    \end{split}
  \end{equation}
where
\begin{align}
    \mathbf{S}_i&=\mathsf{Zonotope}(Region_i)\\ 
    R_i&= \gamma_i Risk_i.
\end{align}
where $\mathsf{Zonotope}$ is the algorithm of  Zonotope approximation. $Region_i$ may be a set without a standard shape, and we approximate it to simplify the computation.

Similar to the risk dilution in classical risk analysis, we can perform a dilution operation on the risk set. Obviously, for an event in a scenario, its risk value will be diluted accordingly if its critical region becomes larger. Because the risk of the event will remain invariant, and the probability $\gamma$ will decrease. Denote a sequence of diluted risk sets as 
\begin{equation}
  \mathbf{Risksets}_i=\{\lambda^j Riskset_i\},\ j=0,1,\dots,l
\end{equation}
where $\lambda^j Riskset_i$ represents $j$-th dilution operation for the risk sets. The process of risk set dilution is
\begin{equation}
  \lambda^j Riskset_i = \{ \lambda^j \mathbf{S}_i, \frac{1}{\lambda^{j\cdot n_s}} R_i\}
\end{equation}
where $\lambda>1$ is the dilution factor and $n_s$ is the dimension of $\mathbf{S}_i$. The operation of $\lambda^j \mathbf{S}_i$ refers to \eqref{Zonotope_affine}. A diluted risk set can serve as an early warning, as it has a larger critical area. On the other hand, with the number of risk set dilutions increasing, more risk gradients can enhance the precision of our risk quantification. Thus, we can construct a risk field with the diluted risk sets.

Finally, the risk field constructed for a scenario consists of multiple diluted risk set sequences, which can be written as
\begin{align}
   RiskField = \{\mathbf{Risksets}_i\},\ i=1, 2,\dots, m.
\end{align}

\subsection{RR Metric}\label{metric}
This article attempts to incorporate the dynamic behavior of the attacked system and the potential consequences of the attacks into the risk assessment. Therefore, we present a metric based on the $Flowpipe$ and $RiskField$, which we will introduce in this subsection. 

Stealth attacks can manipulate the state estimation without being detected. If the ASR set intersects with critical regions, as shown in Fig. \ref{paperflow2}, this means that stealth attacks can induce the system state into these regions. The corresponding events will occur due to the effect of the attack. Based on this, we propose a metric to quantify the attack system's risk by detecting the intersection of Flowpipe and Riskset.

\begin{figure}
  \centering
  \includegraphics[width=0.7\linewidth]{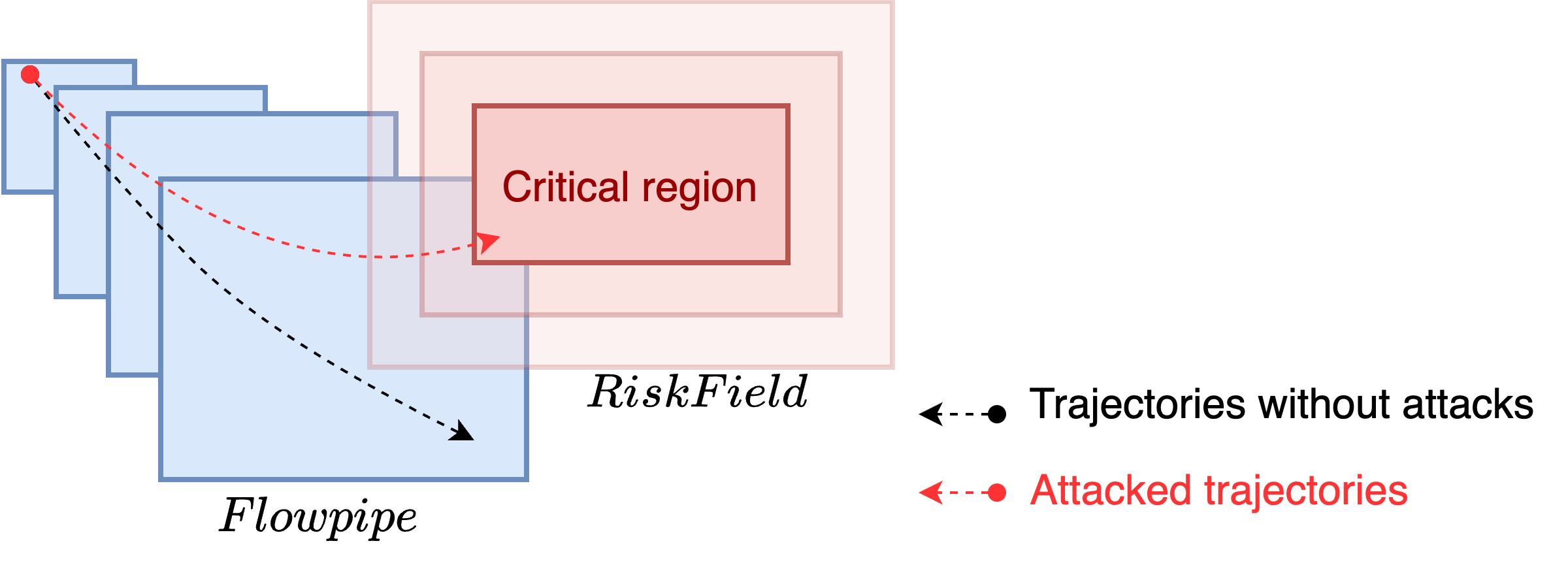}
  \caption{RR metric: Quantifying the risk value based on ASR set approximation and the risk field.}
  \label{paperflow2}
\end{figure}
% \begin{figure}
%   \centering
%   \includegraphics[width=0.6\linewidth]{}
%   \caption{The intersection of the ASR set and risk sets indicators risk of the events.}
%   \label{stealth_attack_reachset}
% \end{figure}
Such a metric has the following advantages: Reachability analysis describes the future behavior of the attacked system, and the diluted risk set also describes the risk gradient. Therefore, we can assess the risk for a future period. On the other hand, considering the movement of objects in the scene, our assessment is dynamic. Compared to metrics such as the distance from the reachable set to critical states, the RR metric can integrate event relevance, which is conducive to risk assessment in complex scenarios.

Assume that $C\in \mathbb{R}^{m\times m}$ is the correlation matrix, and its element $C^{[i][j]}\in [-1.0, 1.0]$ is the correlation coefficient between $Event_i$ and $Event_j$. We can calculate the total risk of the scenario as follows:
\begin{align}
  Risk\ value = \beta \cdot \zeta C \zeta^T 
\end{align}
where $\beta$ is the probability defined in \eqref{ASR set}, and $\zeta=[\zeta^{[1]}, \zeta^{[2]}, \dots, \zeta^{[m]}]$ is the risk vector of all events. The component $\zeta^{[i]}$ is the risk value of $Event_i$, which is given by
\begin{equation}
  \zeta^{[i]} = \max({\varepsilon_{j}}),\ j=0,1,\dots,l
\end{equation}
where $\max$ is a maximum function, and $\varepsilon_{j}$ is an intermediate variable calculated by 
\begin{equation}
  \varepsilon_{j}=\begin{cases}
    \frac{1}{\lambda^{j\cdot n_s}} R_i& \text{ if } \lambda^j \mathbf{S}_i \cap Flowpipe \ne \emptyset  \\
    0 & \text{ if } \lambda^j \mathbf{S}_i \cap Flowpipe = \emptyset
  \end{cases}
\end{equation}
We introduce a lemma from Theorem 5.2 in \cite{guibas2003zonotopes}, used for collision detection between two Zonotopes.
\begin{lemma}[collision detection]
   For any two zonotopes with $n$ generators in total, we can decide whether they intersect in $\mathcal{O}(n\log^2n)$ time.
\end{lemma}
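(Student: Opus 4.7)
The plan is to reduce the intersection test to a point-in-zonotope test and then exploit the hyperplane-arrangement structure of zonotopes to avoid enumerating faces. First, I would use the standard Minkowski-difference reduction: two zonotopes $\mathcal{Z}_1=\langle c_1,G_1\rangle_Z$ and $\mathcal{Z}_2=\langle c_2,G_2\rangle_Z$ intersect if and only if $0\in \mathcal{Z}_1\oplus(-\mathcal{Z}_2)$. Since the reflection of a zonotope is a zonotope (negate the generators) and the Minkowski sum of two zonotopes is a zonotope whose generator matrix is the concatenation of the two generator matrices, the problem collapses to deciding whether the system $Gx=-(c_1-c_2)$ admits a solution $x\in[-1,1]^n$, where $G=[G_1\ {-}G_2]$ has $n$ columns in total. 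This step is syntactic and runs in $O(n)$ time, so the entire complexity will lie in the membership test.

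Second, I would invoke the geometric fact that a zonotope is the affine image of a hypercube and is combinatorially dual to a hyperplane arrangement in $\mathbb{R}^{d-1}$: its boundary decomposes into parallelotope faces indexed by sign vectors in $\{-1,0,+1\}^n$, and membership of a point $p$ is equivalent to the statement that for every "upward" direction $u$ normal to a facet, the support value $h_\mathcal{Z}(u)=u^Tc+\sum_i |u^Tg_i|$ satisfies $u^Tp\le h_\mathcal{Z}(u)$. Rather than check all $\Theta(n^{d-1})$ facet normals, I would organize the generators in a sweep structure so that the relevant facet is located by binary search in a sorted arrangement of projected generators.

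Third, the actual $O(n\log^2 n)$ bound would come from the Guibas--Nguyen--Zhang construction: maintain the generators in a balanced search tree keyed by their slope (or one of their angular coordinates), and overlay a fractional-cascading or parametric-search layer to locate the unique facet whose outward normal separates $p$ from $\mathcal{Z}$. One logarithmic factor comes from binary search along the sweep, and the second from the auxiliary search inside each node of the tree. I would cite Theorem 5.2 of \cite{guibas2003zonotopes} for the exact data structure and simply verify that the reductions above preserve the hypotheses of that theorem, namely that the generator count after Minkowski sum is still $n$ and that the ambient dimension is the same.

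The main obstacle is the $\log^2 n$ factor rather than the naive $O(n^{d-1})$ that one would get from explicitly constructing the boundary complex of the Minkowski-sum zonotope; reproducing the sweep and the balanced auxiliary structure from scratch would be lengthy, so my proof would lean on the cited theorem and concentrate on justifying the reduction. A secondary subtlety is the dimension dependence: the quoted bound is most cleanly stated for low ambient dimension (where facets are parallelograms), so I would make that hypothesis explicit before invoking the external result.
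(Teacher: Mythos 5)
The paper does not actually prove this lemma---it imports it verbatim from Theorem~5.2 of \cite{guibas2003zonotopes}---and your argument ultimately rests on that same citation, so the two are in substance the same. Your Minkowski-difference reduction is correct but redundant (the cited theorem already addresses intersection of two zonotopes directly, with the generator count preserved), and your closing caveat is well taken: the $\mathcal{O}(n\log^2 n)$ bound is established in \cite{guibas2003zonotopes} for low-dimensional zonotopes, a hypothesis the paper leaves implicit.
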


Based on this lemma, we can check the intersection between $Flowpipe$ and $\lambda^j\mathbf{S}_i$ in $\mathcal{O}(n\log^2n)$ time, where $Flowpipe$ consists of multiple Zonotopes. The pseudocode of the metric is shown in Alg. \ref{Metric}.

\begin{algorithm} \SetKwData{Left}{left}\SetKwData{This}{this}\SetKwData{Up}{up} \SetKwFunction{Union}{Union}\SetKwInOut{Input}{input}\SetKwInOut{Output}{output}\SetKw{Push}{push}\SetKwIF{If}{ElseIf}{Else}{if}{then}{else if}{else}{end if}

  \caption{RR metric}\label{Metric} 
	\Input{$Flowpipe$, $RiskField$, $\lambda$, $C$}

	\Output{$Risk\ value$}
	 \BlankLine 
   $Risk\ value\leftarrow 0$ \;

   \For{$i=1, \dots, m$}{
      \For{$j=1, \dots, l$}{
        $\varepsilon_{j} \leftarrow 0$ \;
        \If{$\lambda^j \mathbf{S}_i \cap Flowpipe= \emptyset$}{
          $\varepsilon_{j} \leftarrow \frac{1}{\lambda^{j\cdot n_s}}$ \;
        }
      }
      $\zeta^{[i]}\leftarrow \max (\varepsilon_{0}, \dots, \varepsilon_{l})$
   }
   $Risk\ value\leftarrow \beta \cdot \zeta C \zeta^T $
\end{algorithm}

% ---------------------------------------------------------------

% ---------------------------------------------------------------

% ---------------------------------------------------------------
\section{Illustrative case}\label{Illustrative case}
We illustrate our assessment framework with a case from an automated warehouse.

\subsection{Scenario Analysis}
\begin{figure*}[!t]
  \centering
  \subfloat[\label{safety_scene1}]{\includegraphics[width=0.32\textwidth]{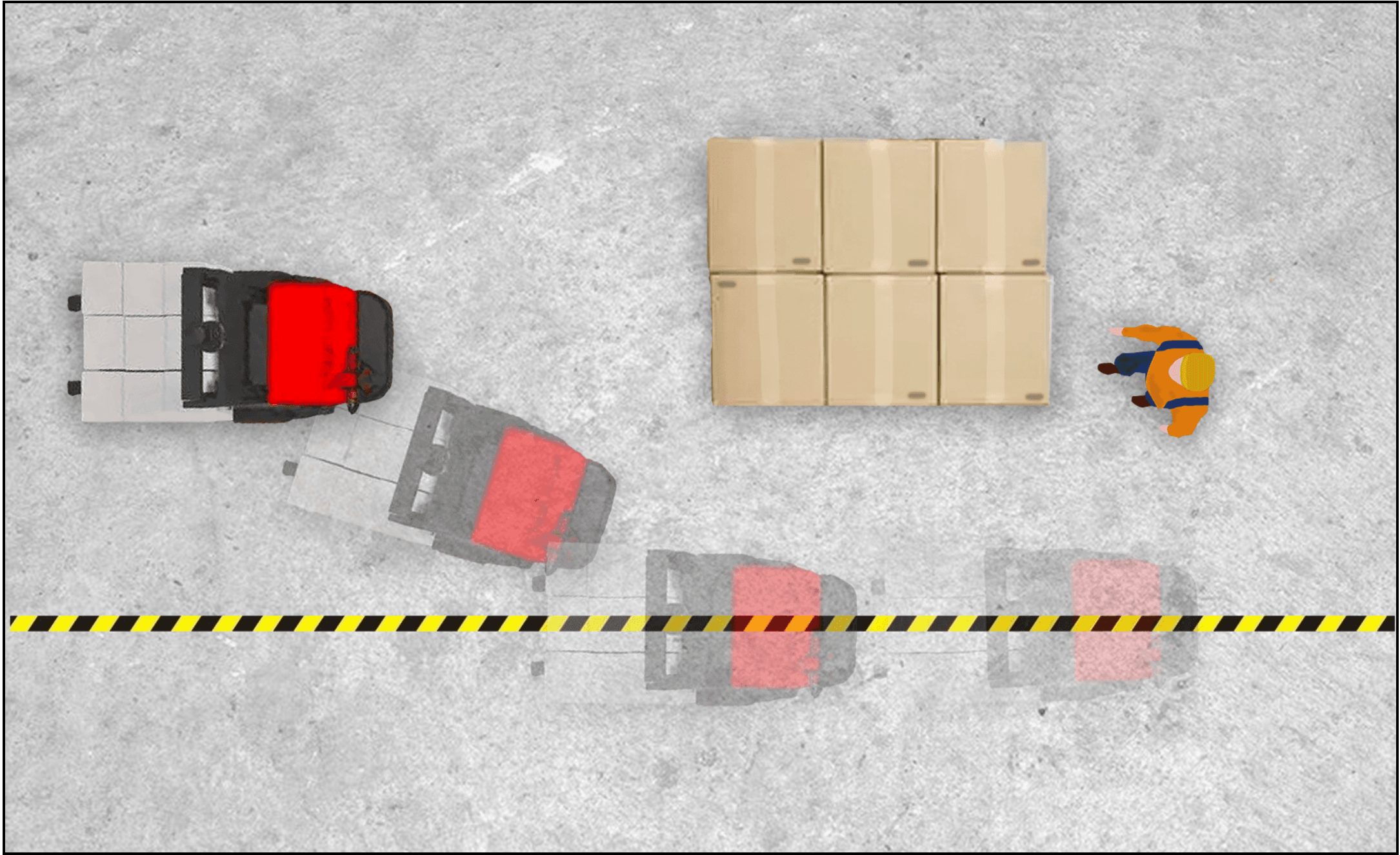}}\hspace{1mm}
  \subfloat[\label{safety_scene2}]{\includegraphics[width=0.32\textwidth]{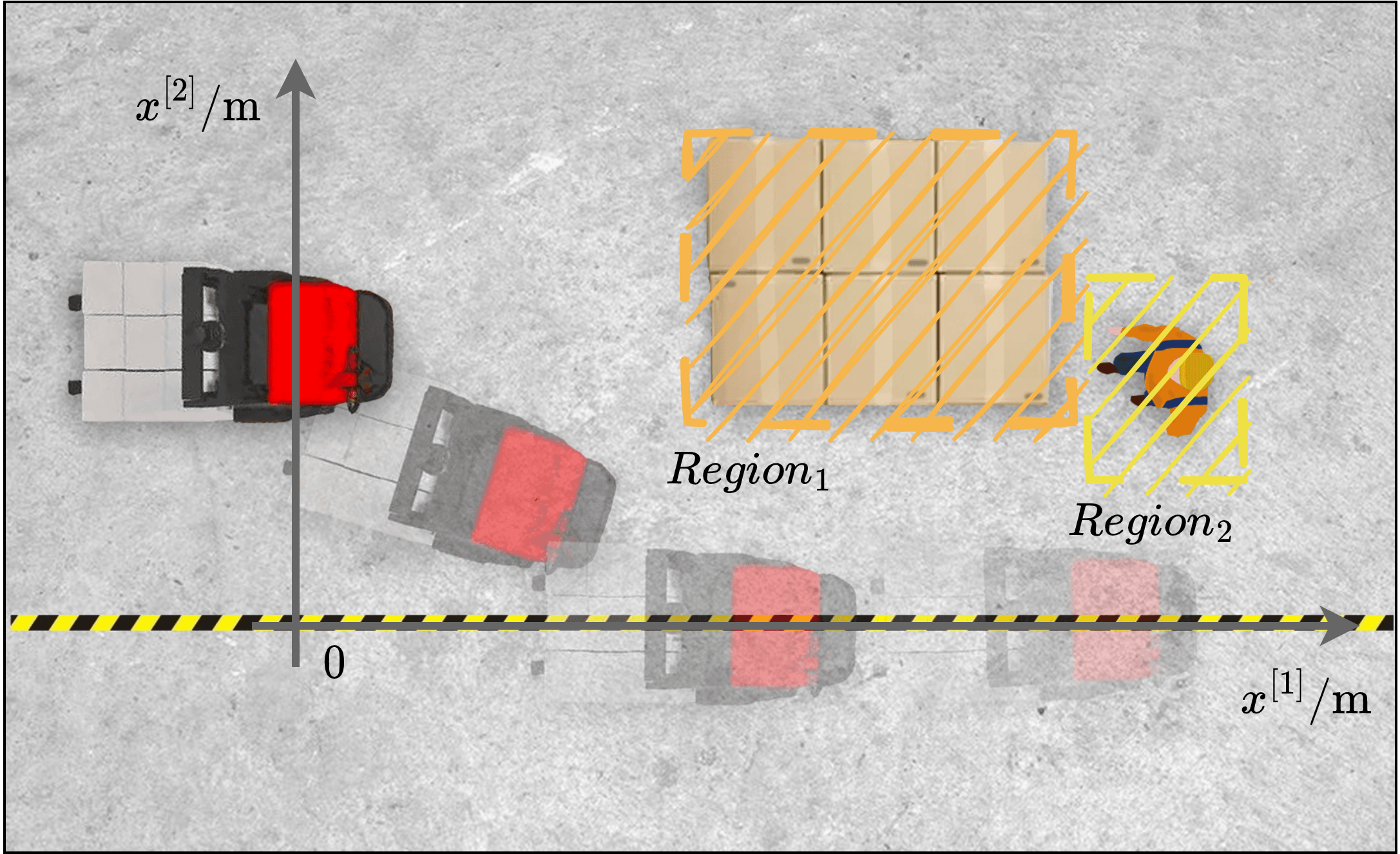}}\hspace{1mm}
  \subfloat[\label{safety_scene3}]{\includegraphics[width=0.32\textwidth]{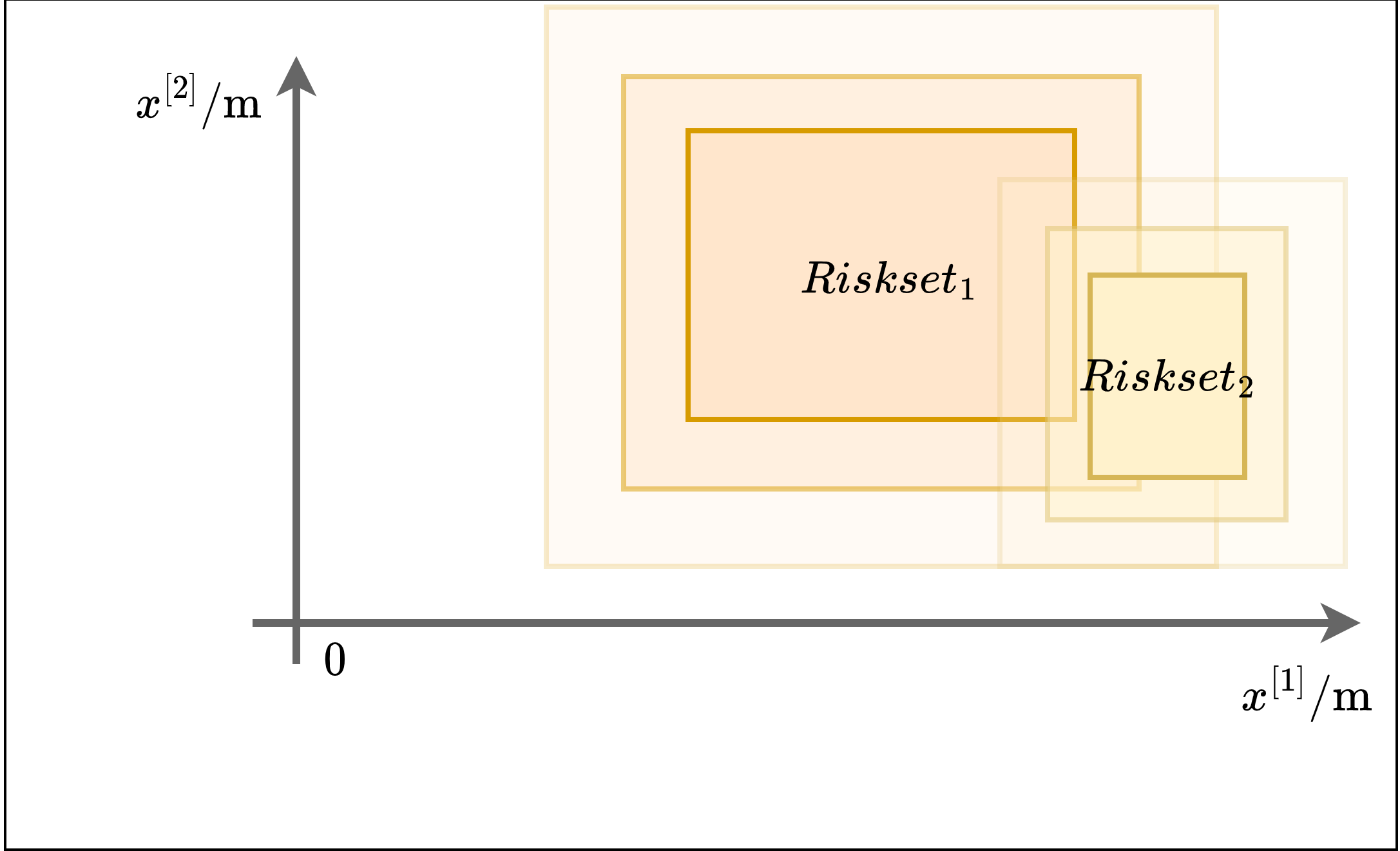}}
  \caption{A case demonstration: Constructing a risk field for the scenario of an unmanned forklift turning process.}
\end{figure*}
% A scenario that an unmanned forklift changes its lanes to avoid an obstacle and a worker in a warehouse.
% Identifying the critical regions for events in the warehouse scenario.
% Define risk sets and dilute them to construct the risk field.

As shown in Fig. \ref{safety_scene1}, we consider a scenario involving an unmanned forklift, a worker, and a set of goods as an obstacle. The forklift is equipped with an autonomous navigation function, and it is transporting goods in the warehouse. The obstacle is fixed on the forklift's original path while the worker checks the goods and moves randomly due to his work requirements. Under normal conditions, the forklift detects the obstacle before it and then turns to a preset safe path (yellow striped line) to avoid it. However, the forklift's control is subject to noise and potential attacks. These factors may interfere with the forklift's navigation function and cause deviations from the planned path. As a result, two events may happen while the forklift is changing lanes: 
  \begin{enumerate}
    \item The forklift hits the obstacle and damages the goods or the forklift itself; 
    \item The forklift strikes the worker and causes injury or death. 
  \end{enumerate}

In this scenario, the forklift is chosen as the analyzed system. A coordinate system is established with the origin at the lower-left position of the scenario. The system variables $(x^{[1]}, x^{[2]})$ are defined as the midpoint between the forklift's front wheels, representing the horizontal and vertical distances in meters. Based on experience and knowledge, critical regions for potential collisions with the obstacle and the worker are identified. These regions are illustrated in Fig. \ref{safety_scene2}. If the forklift turns and $(x^{[1]}, x^{[2]})$ falls into these critical regions, there may be a collision with the obstacle or the worker.

To measure the risk values of these events, the ETA is used to analyze the possible consequences of an event. The probabilities and losses of these consequences can be obtained with statistical data and asset inventory, respectively. Assume that the risks of the two events are estimated as $2000$ and $5000$. Then we can formulate two risk sets 
\begin{align}
Riskset_1&=\{ [9.0, 22.0]\times [4.0, 6.5], 2000.0\}\\
Riskset_2&=\{ [24.0, 26.0]\times [3.0, 5.0], 5000.0\}
\end{align}
with $C=\mathsf{diag}([1.0\ 1.0])$. Finally, the risk field can be constructed as shown in Fig. \ref{safety_scene3}

\subsection{System model}
We use a two-wheeled model to describe the dynamic of the unmanned forklift, which is from the survey\cite{paden2016survey}. Its continuous dynamic is given by
\begin{align}
    {x}^{[1]}_{k+1} &= {x}^{[1]}_k +  \delta t \cdot v_0 \cdot \cos (x^{[3]}_k+u^{[1]}_k) + w^{[1]}_k \\
    {x}^{[2]}_{k+1} &= {x}^{[2]}_k + \delta t \cdot v_0 \cdot \sin (x^{[3]}_k+u^{[1]}_k) + w^{[2]}_k \\
    {x}^{[3]}_{k+1} &= {x}^{[3]}_k + \frac{\delta t \cdot v_0}{L}\sin(u^{[1]}_k) 
\end{align}
where $x_k=[x^{[1]}_k\  x^{[2]}_k\  x^{[3]}_k]^T$ are the system state variables, representing the horizontal, vertical positions in meters, and body angle in radians, respectively; $w_k=[w^{[1]}_k\ w^{[2]}_k]^T$ are the process noise with 
$P_{w}= \mathsf{diag}([0.2\ 0.2])$; $u^{[1]}_k$ is the front wheel angle in radians as the control input, $t$ is time in seconds, $L=\SI{2.0}{m}$ is the length of the forklift, $v_0=\SI{5.0}{m/s}$ is the speed constant, and the discrete-time step is $\delta t= \SI{0.1}{s}$. The sensors are given by
\begin{align}
    y_k^{[1]} &= x^{[2]}_k + v^{[1]}_{k} \\
    y_k^{[2]} &= x^{[3]}_k + v^{[2]}_{k}
\end{align}
where $v_k=[v^{[1]}_{k}\ v^{[2]}_{k}]^T$ are the Gaussian measurement noise with $P_{v}=\mathsf{diag}([0.2\ 0.2])$, and we use the Stanley controller to guide the forklift:
\begin{align}
    u^{[1]}_k &= \hat{x}^{[3]}_k+\arctan(\dfrac{k_g \hat{x}^{[2]}_k}{k_s+v_0})
\end{align}
where $k_g=0.1$ and $k_s=0.05$ are the control parameters.

% \begin{align}
%   \hat{x}_{0}^{a} & =\left[\begin{array}{l}
%   0.0 \\
%   5.0 \\
%   0.0
%   \end{array}\right] \\
%   \hat{P}_{x_{0}}^{a} & =\left[\begin{array}{ccc}
%   0.03 & 0.0 & 0.0 \\
%   0.0 & 0.03 & 0.0 \\
%   0.0 & 0.0 & 0.0
%   \end{array}\right]
%   \end{align}

\subsection{Experimental Results}
This section presents experiments designed to illustrate our risk assessment framework. These experiments encompass the over-approximation of the ASR set and the risk assessment of the forklift's lane-changing process.

We first illustrate the over-approximation capability of our analytical algorithm. We approximate the ASR set for a two-wheel system model of the forklift over the time interval $\mathbf{T}^a_{1:n}=[0.0, 1.0]$. The UKF has stabilized before $t= \SI{0.0}{s}$, without any attacks. The initial parameters of the reachability analysis are set as $\hat{x}_{0}^{a}= [0.0\ 5.0\ 0.0]^T$ and $\hat{P}_{x_{0}}^{a}=\mathsf{diag}([0.03\ 0.03\ 0.001])$. For a consistent analysis, the probabilities $\alpha$, $\beta$, and $\gamma$ are all set to $0.95$. As shown by the light red region in Fig. \ref{ASR_set_approximate_T}, we visualize the result of the over-approximated system's reachability, referred to as the Flowpipe, during $\mathbf{T}^a_{1:n}=[0.0, 1.0]$. Concurrently, we utilize a multitude of attacked trajectories to delineate the exact reachability bounds of the system under stealth attacks for comparison. From $t=0.0$, the sensors undergo stealth attacks in 500 Monte-Carlo simulations. We select attack vectors $a_k=[a_k^{[1]}\ a_k^{[2]}]^T$ from a uniform distribution across $[-1, 1]\times [-1, 1]$. Crucially, we ensure these vectors are stealthy by discarding any that would trigger an immediate alarm, replacing them with new random selections. This method generates a sequence of stealth attacks, $a_{1:n}\in\mathbf{\mathcal{A}}_{1:n}$. Attack trajectories are depicted as red lines in Fig. \ref{ASR_set_approximate_T}, with the majority falling within the flowpipe. Some trajectories exceed the bounds due to process noise, as our analysis accounts for bounded noise with a probability $\beta=0.95$. These results validate our algorithm's ability to over-approximate the reachability of a system under attack.

\begin{figure}
  \centering
    \includegraphics[width=0.80\linewidth]{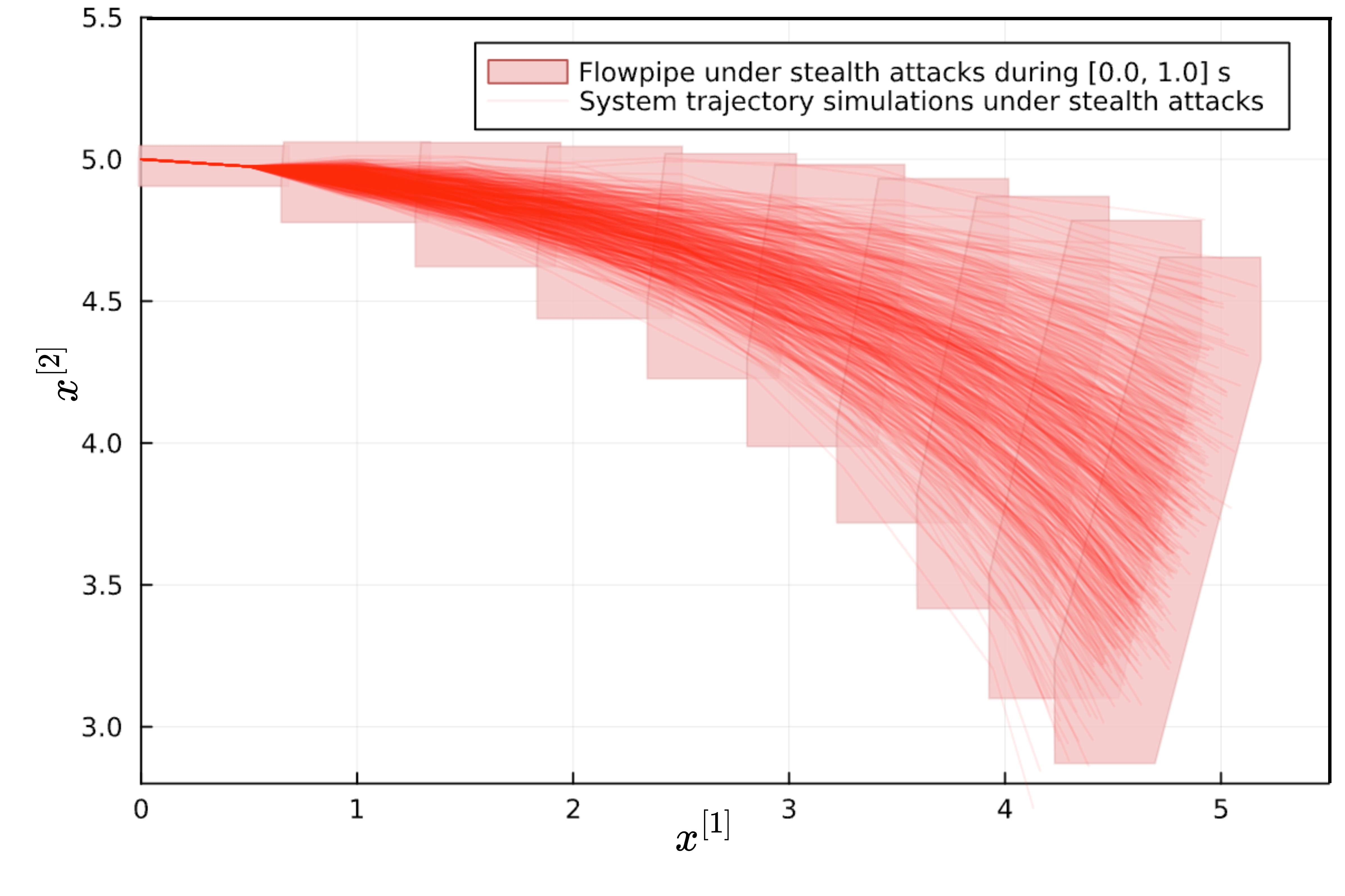}
    \caption{The approximation of ASR set over $\mathbf{T}^a_{1:n}=[0.0, 1.0]$, projected onto the $(x^{[1]}, x^{[2]})$ plane.}
    \label{ASR_set_approximate_T}
% The red lines are 500 simulations of the trajectory of the attacked system.

    \includegraphics[width=0.80\linewidth]{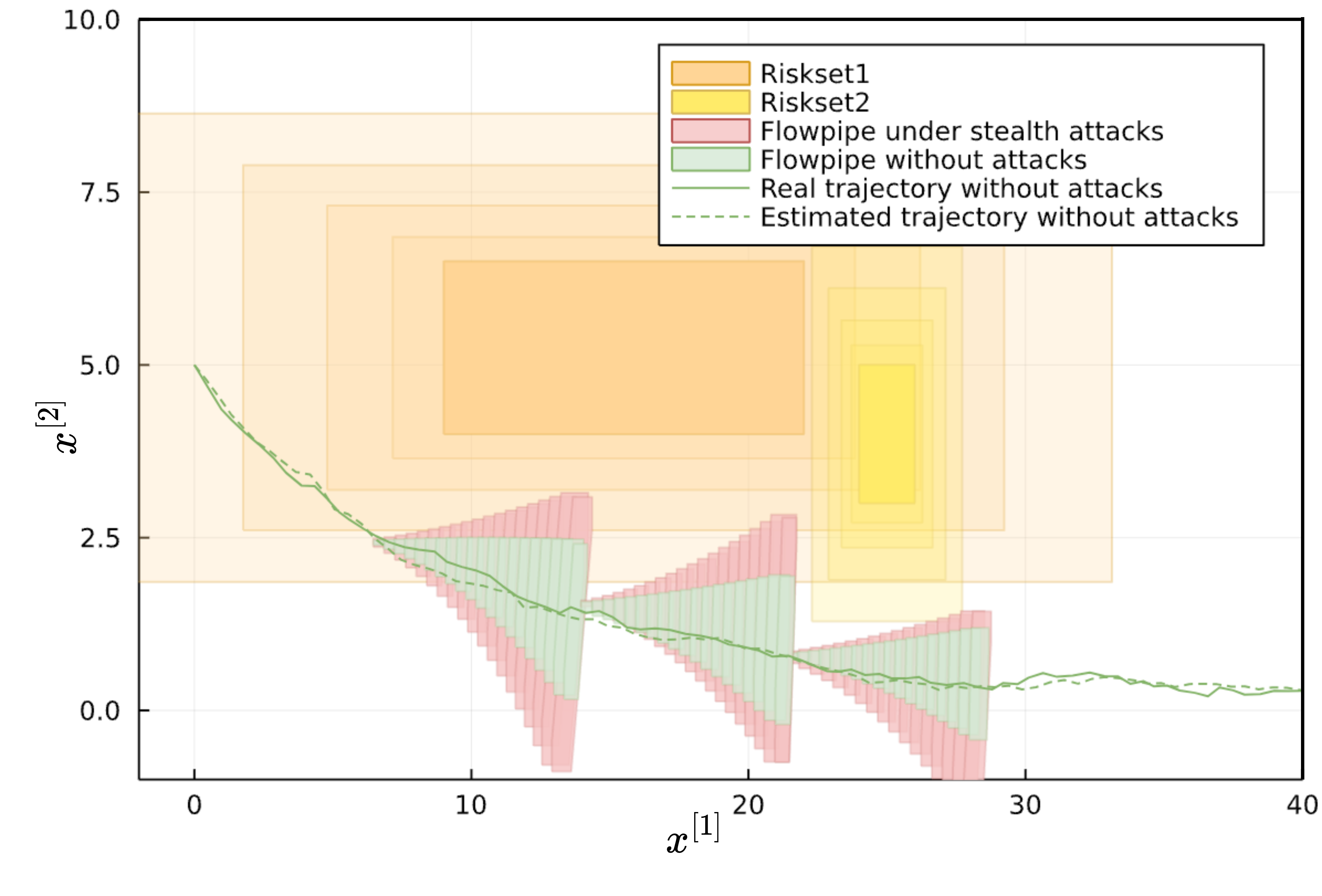}
    \caption{Simulation of forklift path-changing process and risk assessment.}
    \label{Reach_filter_simulation}
% Green in the image represents an attack-free simulation, while red indicates a simulation under attack for comparison. This figure selectively displays the Flowpipe at periods $\mathbf{T}_{1:n}=\mathbf{T}^a_{1:n}=[1.5, 3.0]$, $[3.0, 4.5]$, and $[4.5, 6.0]$ for brevity.

    \includegraphics[width=0.80\linewidth]{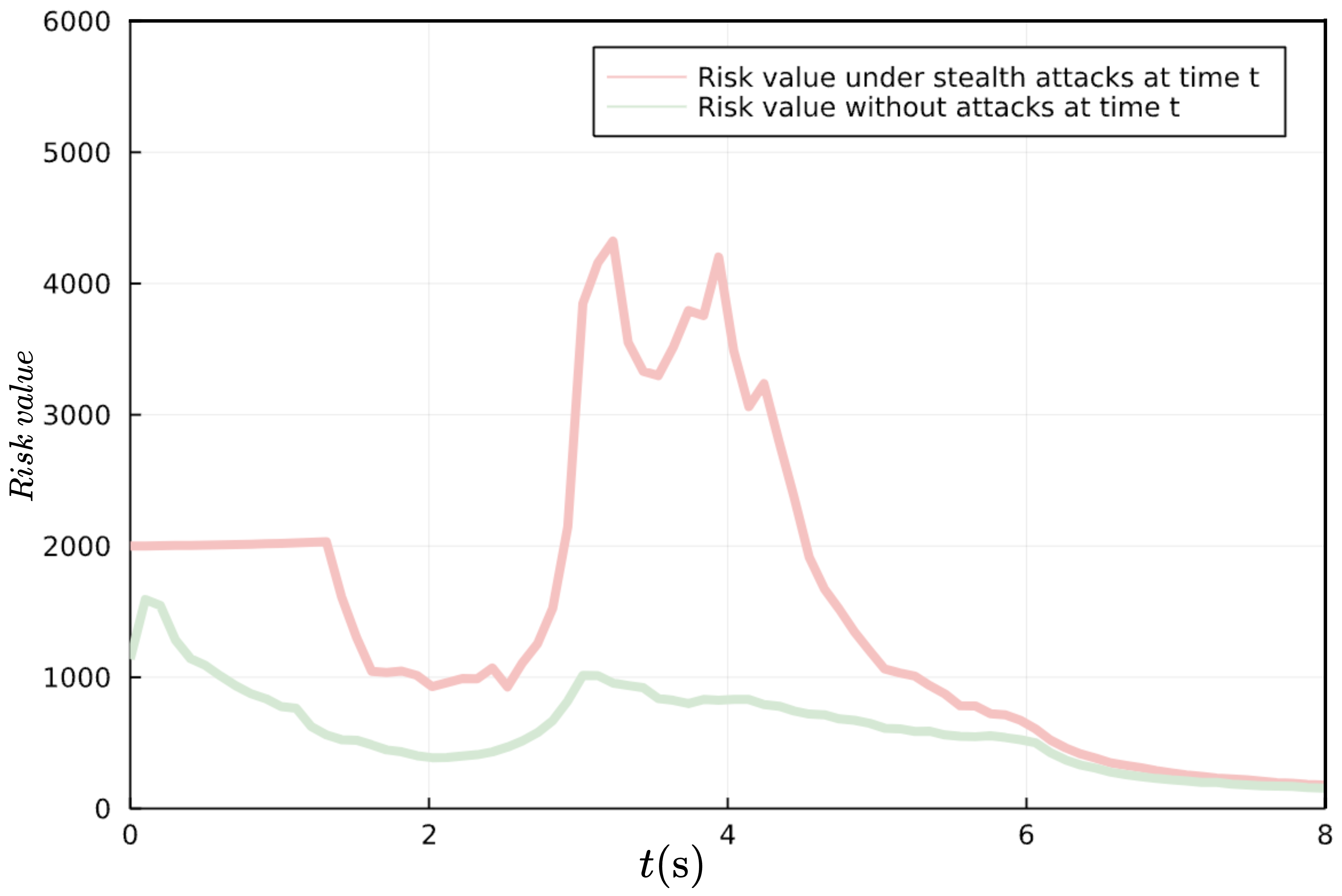}
    \caption{Dynamic risk value of the forklift lane changing process.}
    \label{risk_real_time}
% This figure shows the additional risk from stealth attacks.

\end{figure}

The effectiveness of the proposed framework is demonstrated by applying it to the forklift turning scenario. First, we simulate the state estimation of the forklift turning process over the interval $t\in [0.0, 10.0]$, as the green line (true trajectory) and the green dotted line (estimated trajectory) shown in Fig. \ref{Reach_filter_simulation}. In the absence of any attack, the estimated state tracks the true state with an acceptable estimation error, and the forklift correctly approaches the preset safe path. Subsequently, we approximate the reachability of the forklift under stealth attacks at each time step. The ASR set is approximated as flowpipes over the interval $[t_k, t_k + 1.5]$, where $t_k = 0.0, 0.1, \dots, 10.0$, as the light red regions shown in Fig. \ref{Reach_filter_simulation}. Meanwhile, to make a comparison, we also compute the flowpipe in the absence of attacks, as shown in the light green regions shown in Fig. \ref{Reach_filter_simulation}. In the case of not considering stealth attacks, $r_k^T \hat{P}_{r_k}^{-1} r_k\leq a$, where the value of $a$ represents the average value obtained during state estimation simulation. Note that, for brevity, there are only three flowpipe comparisons over time intervals $[1.5, 3.0]$, $[3.0, 4.5]$, and $[4.5, 6.0]$ are drawn in the figure. 

The risk field comprises two lists of diluted risk sets, with a dilution factor of $\lambda = 1.05$, corresponding to the two events derived from the scenario analysis. For brevity, the risk field is visually represented by plotting partially diluted risk sets, as illustrated in Fig. \ref{Reach_filter_simulation}. Subsequently, the comparative dynamic risk values, with and without consideration for stealth attacks, are computed by employing the RR metric at each time step. As illustrated in Fig. \ref{risk_real_time}, the presence of two peaks in the red line corresponds to the two events that the attacked forklift may trigger at subsequent times. The different waveforms in the two lines illustrate the additional risk introduced by stealth attacks. These results indicate that the proposed framework can effectively predict the risk of stealth attacks within a given scenario. The evaluated risk value integrates the impacts of system dynamics, noise, stealth attacks, and the scenario itself.

\section{Conclusion}\label{Conclusion}
This article presents a novel risk assessment framework for nonlinear systems under stealth attacks. First, we define the system’s reachability under stealth attacks and propose an algorithm to approximate it. Then, we introduce the concept of a risk field to formally describe the risk distribution of a scenario and present a method to construct the field for a given scenario. Following this, we introduce a metric to quantify the risk value based on the reachability approximation and the risk field. Finally, we demonstrate the risk prediction ability of our framework with a numerical example from an automated warehouse. The dynamic results from our framework reveal the system’s potential vulnerability in an explainable way, providing an early warning for safety controls. Future research will focus on enhancing the efficiency of risk assessment and implementing risk-informed safety controls.

\bibliographystyle{unsrt}
\bibliography{reference.bib}

\begin{thebibliography}{10}

\bibitem{henshaw2018research}
Michael Henshaw.
\newblock Research challenges and transatlantic collaboration on transportation cyber-physical systems.
\newblock {\em Transportation Cyber-Physical Systems}, pages 247--265, 2018.

\bibitem{zhang2015health}
Yin Zhang, Meikang Qiu, Chun-Wei Tsai, Mohammad~Mehedi Hassan, and Atif Alamri.
\newblock Health-cps: Healthcare cyber-physical system assisted by cloud and big data.
\newblock {\em IEEE Systems Journal}, 11(1):88--95, 2015.

\bibitem{wittenberg2016human}
Carsten Wittenberg.
\newblock Human-cps interaction-requirements and human-machine interaction methods for the industry 4.0.
\newblock {\em IFAC-PapersOnLine}, 49(19):420--425, 2016.

\bibitem{farwell2011stuxnet}
James~P Farwell and Rafal Rohozinski.
\newblock Stuxnet and the future of cyber war.
\newblock {\em Survival}, 53(1):23--40, 2011.

\bibitem{choras2016cyber}
Micha{\l} Chora{\'s}, Rafa{\l} Kozik, Adam Flizikowski, Witold Ho{\l}ubowicz, and Rafa{\l} Renk.
\newblock Cyber threats impacting critical infrastructures.
\newblock In {\em Managing the Complexity of Critical Infrastructures}, pages 139--161. Springer, Cham, 2016.

\bibitem{Ding2018A}
Derui Ding, Qing~Long Han, Yang Xiang, Xiaohua Ge, and Xian~Ming Zhang.
\newblock A survey on security control and attack detection for industrial cyber-physical systems.
\newblock {\em Neurocomputing}, 2018.

\bibitem{musleh2019survey}
Ahmed~S Musleh, Guo Chen, and Zhao~Yang Dong.
\newblock A survey on the detection algorithms for false data injection attacks in smart grids.
\newblock {\em IEEE Transactions on Smart Grid}, 11(3):2218--2234, 2019.

\bibitem{cardenas2011attacks}
Alvaro~A C{\'a}rdenas, Saurabh Amin, Zong-Syun Lin, Yu-Lun Huang, Chi-Yen Huang, and Shankar Sastry.
\newblock Attacks against process control systems: risk assessment, detection, and response.
\newblock In {\em Proceedings of the 6th ACM symposium on information, computer and communications security}, pages 355--366, 2011.

\bibitem{khazraei2022resiliency}
Amir Khazraei and Miroslav Pajic.
\newblock Resiliency of nonlinear control systems to stealthy sensor attacks.
\newblock In {\em 2022 IEEE 61st Conference on Decision and Control (CDC)}, pages 7109--7114. IEEE, 2022.

\bibitem{zhang2021stealthy}
Kangkang Zhang, Christodoulos Keliris, Thomas Parisini, and Marios~M Polycarpou.
\newblock Stealthy integrity attacks for a class of nonlinear cyber-physical systems.
\newblock {\em IEEE Transactions on Automatic Control}, 2021.

\bibitem{Yang2022A}
Tianci Yang and Chen Lv.
\newblock A secure sensor fusion framework for connected and automated vehicles under sensor attacks.
\newblock {\em IEEE internet of things journal}, 2022.

\bibitem{sui2020vulnerability}
Tianju Sui, Yilin Mo, Dami{\'a}n Marelli, Ximing Sun, and Minyue Fu.
\newblock The vulnerability of cyber-physical system under stealthy attacks.
\newblock {\em IEEE Transactions on Automatic Control}, 66(2):637--650, 2020.

\bibitem{liu2019joint}
Chensheng Liu, Hao Liang, Tongwen Chen, Jing Wu, and Chengnian Long.
\newblock Joint admittance perturbation and meter protection for mitigating stealthy fdi attacks against power system state estimation.
\newblock {\em IEEE Transactions on Power Systems}, 35(2):1468--1478, 2019.

\bibitem{akametalu2014reachability}
Anayo~K Akametalu, Jaime~F Fisac, Jeremy~H Gillula, Shahab Kaynama, Melanie~N Zeilinger, and Claire~J Tomlin.
\newblock Reachability-based safe learning with gaussian processes.
\newblock In {\em 53rd IEEE Conference on Decision and Control}, pages 1424--1431. IEEE, 2014.

\bibitem{teixeira2019optimal}
Andr{\'e}~MH Teixeira.
\newblock Optimal stealthy attacks on actuators for strictly proper systems.
\newblock In {\em 2019 IEEE 58th Conference on Decision and Control (CDC)}, pages 4385--4390. IEEE, 2019.

\bibitem{murguia2020security}
Carlos Murguia, Iman Shames, Justin Ruths, and Dragan Ne{\v{s}}i{\'c}.
\newblock Security metrics and synthesis of secure control systems.
\newblock {\em Automatica}, 115:108757, 2020.

\bibitem{kwon2017reachability}
Cheolhyeon Kwon and Inseok Hwang.
\newblock Reachability analysis for safety assurance of cyber-physical systems against cyber attacks.
\newblock {\em IEEE Transactions on Automatic Control}, 63(7):2272--2279, 2017.

\bibitem{hwang2023lmi}
Sounghwan Hwang, Minhyun Cho, Sungsoo Kim, and Inseok Hwang.
\newblock An lmi-based risk assessment of leader-follower multi-agent system under stealthy cyberattacks.
\newblock {\em IEEE Control Systems Letters}, 2023.

\bibitem{fan2021improved}
Jianwei Fan, Jun Huang, and Xudong Zhao.
\newblock Improved interval estimation method for cyber-physical systems under stealthy deception attacks.
\newblock {\em IEEE Transactions on Signal and Information Processing over Networks}, 8:1--11, 2021.

\bibitem{zhang2020reachability}
Qirui Zhang, Kun Liu, Zhonghua Pang, Yuanqing Xia, and Tao Liu.
\newblock Reachability analysis of cyber-physical systems under stealthy attacks.
\newblock {\em IEEE Transactions on Cybernetics}, 52(6):4926--4934, 2022.

\bibitem{hashemi2018comparison}
Navid Hashemi, Carlos Murguia, and Justin Ruths.
\newblock A comparison of stealthy sensor attacks on control systems.
\newblock In {\em 2018 Annual American Control Conference (ACC)}, pages 973--979. IEEE, 2018.

\bibitem{liu2021reachability}
Hao Liu, Ben Niu, and Jiahu Qin.
\newblock Reachability analysis for linear discrete-time systems under stealthy cyber attacks.
\newblock {\em IEEE Transactions on Automatic Control}, 66(9):4444--4451, 2021.

\bibitem{wang2014iss}
Xiangke Wang, Jiahu Qin, and Changbin Yu.
\newblock Iss method for coordination control of nonlinear dynamical agents under directed topology.
\newblock {\em IEEE transactions on cybernetics}, 44(10):1832--1845, 2014.

\bibitem{lyu2019safety}
Xiaorong Lyu, Yulong Ding, and Shuang-Hua Yang.
\newblock Safety and security risk assessment in cyber-physical systems.
\newblock {\em IET Cyber-Physical Systems: Theory \& Applications}, 4(3):221--232, 2019.

\bibitem{sarkka2007unscented}
Simo Sarkka.
\newblock On unscented kalman filtering for state estimation of continuous-time nonlinear systems.
\newblock {\em IEEE Transactions on automatic control}, 52(9):1631--1641, 2007.

\bibitem{chambers1976method}
John~M Chambers, Colin~L Mallows, and BW4159820341 Stuck.
\newblock A method for simulating stable random variables.
\newblock {\em Journal of the american statistical association}, 71(354):340--344, 1976.

\bibitem{golub2013matrix}
Gene~H Golub and Charles~F Van~Loan.
\newblock {\em Matrix computations}.
\newblock JHU press, 2013.

\bibitem{wu2023double}
Yifei Wu, Yinrui Ma, and Ye~Jiang.
\newblock Double-state chi-square test based sparse grid quadrature filtering algorithm and its application in integrated navigation.
\newblock {\em IET Control Theory \& Applications}, 17(9):1203--1213, 2023.

\bibitem{julier2004unscented}
Simon~J Julier and Jeffrey~K Uhlmann.
\newblock Unscented filtering and nonlinear estimation.
\newblock {\em Proceedings of the IEEE}, 92(3):401--422, 2004.

\bibitem{chen2012taylor}
Xin Chen, Erika Abraham, and Sriram Sankaranarayanan.
\newblock Taylor model flowpipe construction for non-linear hybrid systems.
\newblock In {\em 2012 IEEE 33rd Real-Time Systems Symposium}, pages 183--192. IEEE, 2012.

\bibitem{moore2009introduction}
Ramon~E Moore, R~Baker Kearfott, and Michael~J Cloud.
\newblock {\em Introduction to interval analysis}.
\newblock SIAM, 2009.

\bibitem{chen2015reachability}
Xin Chen.
\newblock {\em Reachability analysis of non-linear hybrid systems using taylor models}.
\newblock PhD thesis, Fachgruppe Informatik, RWTH Aachen University, 2015.

\bibitem{kopetzki2017methods}
Anna-Kathrin Kopetzki, Bastian Sch{\"u}rmann, and Matthias Althoff.
\newblock Methods for order reduction of zonotopes.
\newblock In {\em 2017 IEEE 56th Annual Conference on Decision and Control (CDC)}, pages 5626--5633. IEEE, 2017.

\bibitem{cozzani2005assessment}
Valerio Cozzani, Gianfilippo Gubinelli, Giacomo Antonioni, Gigliola Spadoni, and Severino Zanelli.
\newblock The assessment of risk caused by domino effect in quantitative area risk analysis.
\newblock {\em Journal of hazardous Materials}, 127(1-3):14--30, 2005.

\bibitem{li2022review}
Hui Li, Zhouyang Ren, Miao Fan, Wenyuan Li, Yan Xu, Yunpeng Jiang, and Weiyi Xia.
\newblock A review of scenario analysis methods in planning and operation of modern power systems: Methodologies, applications, and challenges.
\newblock {\em Electric Power Systems Research}, 205:107722, 2022.

\bibitem{batrouni2018scenario}
Marwan Batrouni, Aur{\'e}lie Bertaux, and Christophe Nicolle.
\newblock Scenario analysis, from bigdata to black swan.
\newblock {\em Computer Science Review}, 28:131--139, 2018.

\bibitem{9090897}
Stefan Riedmaier, Thomas Ponn, Dieter Ludwig, Bernhard Schick, and Frank Diermeyer.
\newblock Survey on scenario-based safety assessment of automated vehicles.
\newblock {\em IEEE Access}, 8:87456--87477, 2020.

\bibitem{ferdous2009handling}
Refaul Ferdous, Faisal Khan, Rehan Sadiq, Paul Amyotte, and Brian Veitch.
\newblock Handling data uncertainties in event tree analysis.
\newblock {\em Process safety and environmental protection}, 87(5):283--292, 2009.

\bibitem{dunjo2010hazard}
Jordi Dunj{\'o}, Vasilis Fthenakis, Juan~A V{\'\i}lchez, and Josep Arnaldos.
\newblock Hazard and operability (hazop) analysis. a literature review.
\newblock {\em Journal of hazardous materials}, 173(1-3):19--32, 2010.

\bibitem{chiozza2009fmea}
Maria~Laura Chiozza and Clemente Ponzetti.
\newblock Fmea: a model for reducing medical errors.
\newblock {\em Clinica chimica acta}, 404(1):75--78, 2009.

\bibitem{sun2023contradictions}
Zhicong Sun, Yulong Ding, Ke~Pei, and Shuang-Hua Yang.
\newblock Contradictions identification of safety and security requirements for industrial cyber-physical systems.
\newblock {\em IEEE Internet of Things Journal}, 2023.

\bibitem{guibas2003zonotopes}
Leonidas~J Guibas, An~Thanh Nguyen, and Li~Zhang.
\newblock Zonotopes as bounding volumes.
\newblock In {\em SODA}, volume~3, pages 803--812. Citeseer, 2003.

\bibitem{paden2016survey}
Brian Paden, Michal {\v{C}}{\'a}p, Sze~Zheng Yong, Dmitry Yershov, and Emilio Frazzoli.
\newblock A survey of motion planning and control techniques for self-driving urban vehicles.
\newblock {\em IEEE Transactions on intelligent vehicles}, 1(1):33--55, 2016.

\end{thebibliography}

\newpage
\begin{IEEEbiography}[{\includegraphics[width=1in,height=1.25in,clip,keepaspectratio]{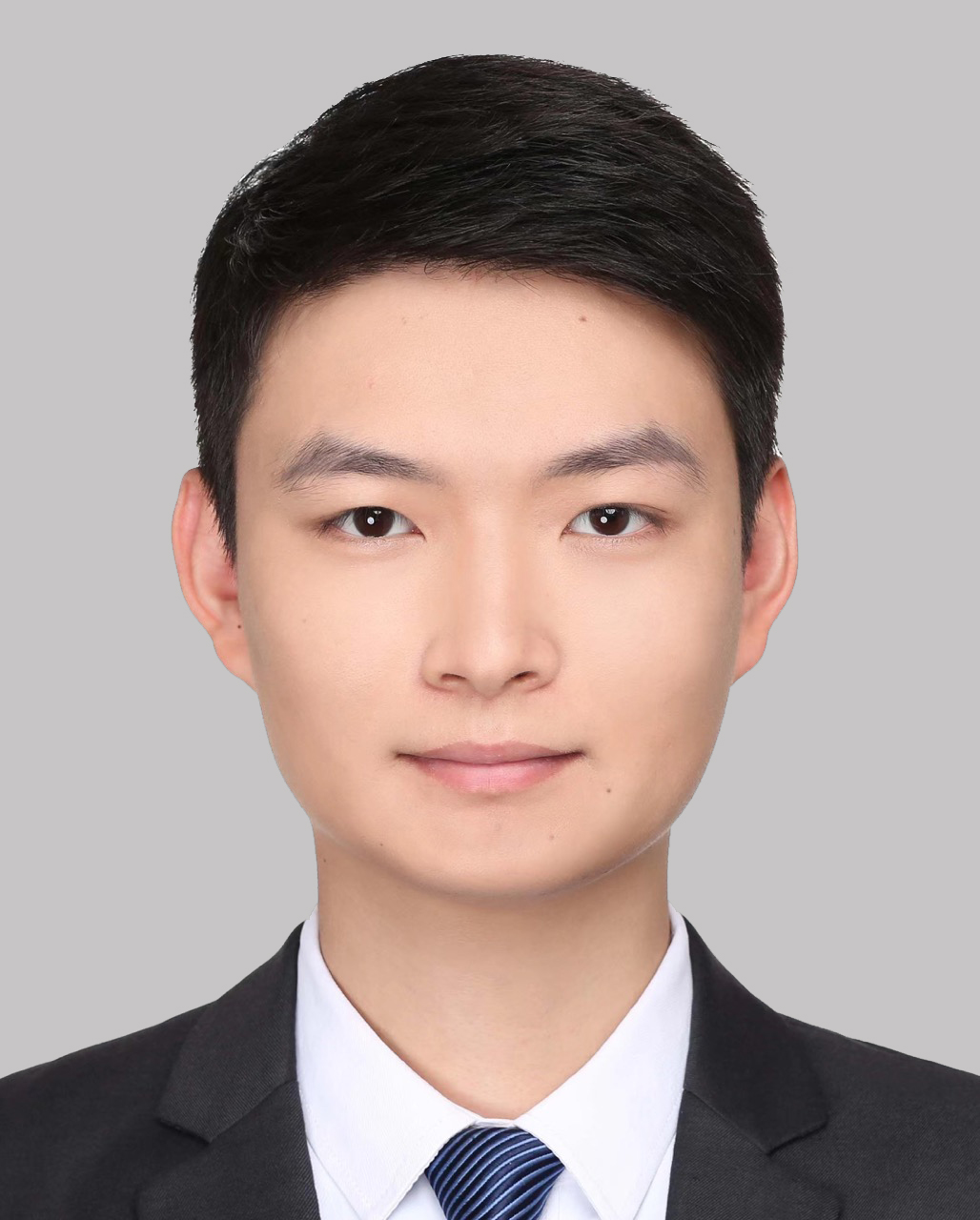}}]{Guang Chen} 
  received his B.E. degree in Materials Processing and Control Engineering from Southern University of Science and Technology, China, in 2019 and his M.Eng degree from the Department of Mechanical and Energy Engineering at Southern University of Science and Technology, China, in 2023. His research focuses on the safety assessment of cyber-physical systems.
\end{IEEEbiography}
\vspace{-10 mm}

\begin{IEEEbiography}[{\includegraphics[width=1in,height=1.25in,clip,keepaspectratio]{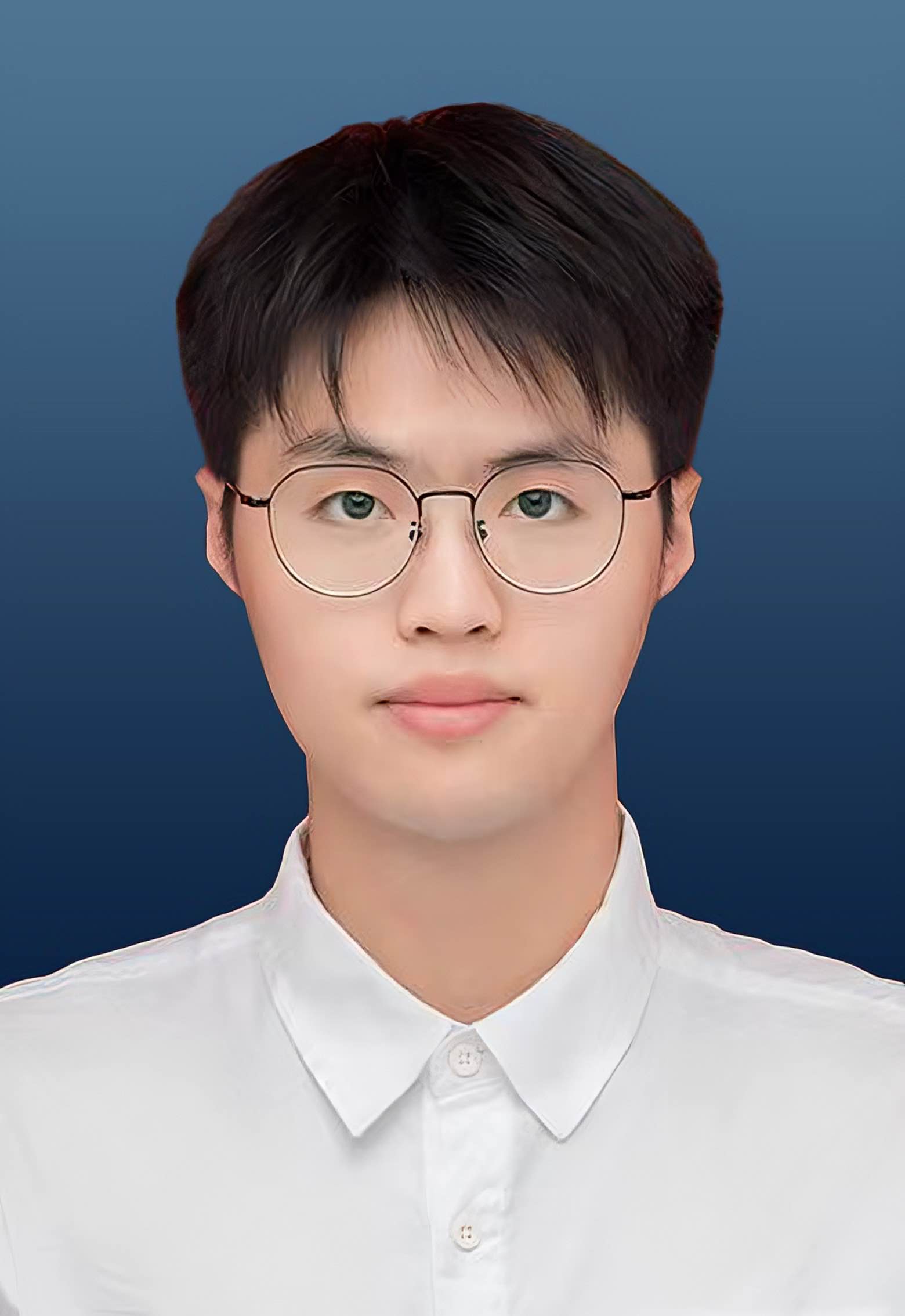}}]{Zhicong Sun}
  received his B.E. degree in Communication Engineering from Harbin Institute of Technology, China, in 2020 and his M.Eng degree from the Department of Computer Science and Engineering at Southern University of Science and Technology, China, in 2023. He is currently a PhD student at Hong Kong Polytechnic University. His research interests include the safety and security of cyber-physical systems.
\end{IEEEbiography}
\vspace{-10 mm}

\begin{IEEEbiography}[{\includegraphics[width=1in,height=1.25in,clip,keepaspectratio]{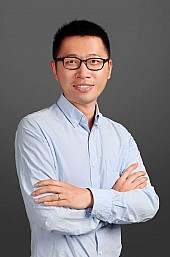}}]{Yulong Ding}
  received his B.A.Sc. and M.A.Sc. degrees in Chemical Engineering from Tsinghua University, China, in 2005 and 2008, respectively. He earned his Ph.D. degree in Chemical Engineering from The University of British Columbia, Canada, in 2012. He is currently a Research Associate Professor at the Department of Computer Science and Engineering at Southern University of Science and Technology. His primary research interests include the industrial Internet of Things and low-power wide area networks (LPWANs).
\end{IEEEbiography}
\vspace{-10 mm}

\begin{IEEEbiography}[{\includegraphics[width=1in,height=1.25in,clip,keepaspectratio]{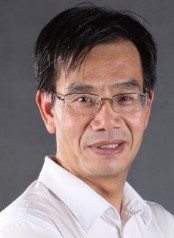}}]{Shuang-Hua Yang}
  (Senior Member, IEEE) received his B.S. degree in Instrument and Automation and his M.S. degree in Process Control from the China University of Petroleum (Huadong), Beijing, China, in 1983 and 1986, respectively. He earned his Ph.D. degree in Intelligent Systems from Zhejiang University, Hangzhou, China, in 1991. He is currently the Director of the Shenzhen Key Laboratory of Safety and Security for Next Generation of Industrial Internet at Southern University of Science and Technology, China, and also serves as the Head of the Department of Computer Science at the University of Reading, UK. His research interests include cyber-physical systems, the Internet of Things, wireless network-based monitoring and control, and safety-critical systems. He is a Fellow of IET and InstMC in the UK and an Associate Editor of IET Cyber-Physical Systems: Theory and Applications.
  \end{IEEEbiography}

  \enlargethispage{-6.5cm}
\end{document}